\newtheorem{theorem}{Theorem}
\newtheorem{lemma}{Lemma}
\newtheorem{corollary}{Corollary}
\newtheorem{definition}{Definition}
\newtheorem{assumption}{Assumption}
\newtheorem{remark}{Remark}
\DeclareMathOperator*{\argmax}{arg\,max\,}
\DeclareMathOperator{\ee}{\mathbb{E}}			
\DeclareMathOperator{\prob}{\mathbb{P}}			
\begin{document}

\title[Dynamic Games with Asymmetric Information and Hidden Actions]{An Approach to Stochastic Dynamic Games with Asymmetric Information and Hidden Actions} 

\author*[1]{\fnm{Yi} \sur{Ouyang}}\email{ouyangyii@gmail.com}

\author[2]{\fnm{Hamidreza} \sur{Tavafoghi}}\email{hamidreza.tavafoghi@gmail.com}

\author[3]{\fnm{Demosthenis} \sur{Teneketzis}}\email{teneket@umich.edu}

\affil[1]{\orgname{Preferred Networks America, Inc.}, \orgaddress{\city{Burlingame}, \state{CA}, \country{USA}}}

\affil[2]{\orgname{Google}, \orgaddress{\city{Mountain View}, \state{CA}, \country{USA}}}

\affil[3]{\orgdiv{Department of Electrical Engineering
and Computer Science}, \orgname{University of Michigan}, \orgaddress{\city{Ann Arbor}, \state{MI}, \country{USA}}}

\abstract{
We consider in discrete time, a general class of sequential stochastic dynamic games with asymmetric information with the following features. The underlying system has Markovian dynamics controlled by the agents' joint actions. Each agent's instantaneous utility depends on the current system state and the agents' joint actions. At each time instant each agent makes a private noisy observation of the current system state and the agents' actions in the previous time instant. In addition, at each time instant all agents have a common noisy observation of the current system state and their actions in the previous time instant. 
Each agent's actions are part of his private information. The objective is to determine Bayesian Nash Equilibrium (BNE) strategy profiles that are based on a compressed version of the agents' information and can be sequentially computed; such BNE strategy profiles may not always exist. We present an approach/methodology that achieves the above-stated objective, along with an instance of a game where BNE strategy profiles with the above-mentioned characteristics exist.
We show that the methodology also works for the case where the agents have no common observations.
}

\keywords{Dynamic games, asymmetric information, hidden actions, common information, information compression, sequential decomposition}

\maketitle

\section{Introduction}

We study, in discrete time, a general class of sequential stochastic dynamic games with asymmetric information. We consider a setting where the underlying system has Markovian dynamics controlled by the agents’ joint actions. Each agent's instantaneous utility depends on the agents’ joint actions and the system state. At each time instant each agent makes a private noisy observation that depends on the current system state and the agents’ actions in the previous time instant. 
In addition, at each time instant all agents may have a common noisy observation of the system state and their actions in the previous time instant. 
The agents' actions are hidden, that is, each agent's actions are not directly observable by the other agents.
Therefore, at every time instant agents have asymmetric and imperfect information about the game's history.
Dynamic games with the above features arise in engineering (cybersecurity, transportation, energy markets), in economics (industrial organization), and in socio-technological applications.


As pointed out in \cite{tang2022dynamic}, the key challenges in the study of dynamic games with asymmetric information are: 
(i) The domain of agents' strategies increases with time, as the agents acquire information over time. Thus, the computational complexity of the agents' strategies increases with time. 
(ii) Due to signaling\footnote{
Signaling in games is more complex than signaling in teams because the agents have diverging incentives and their strategies are their own private information.
}
\citep{Ho:1980}, in many instances an agent's assessment of the game's status at time $t$, therefore his strategy at time $t$, depends on the strategies of agents who acted before him. Consequently, we cannot obtain the standard sequential decomposition (that sequentially determines the components of an equilibrium strategy profile) of the kind provided by the standard dynamic programming algorithm (where the agent's optimal strategy at any time $t$ does not depend on past strategies \cite[Chapter 6.5]{kumar1986stochastic}).

To address these challenges, we can look for equilibrium strategy profiles that are based on a compressed version of the agents' information and can be sequentially computed. However, such equilibrium strategy profiles may not exist.

In this paper we propose an approach, described in detail in Section \ref{sec:Methodology}, that addresses the above-stated challenges. According to this approach, we first compress the agents' private and common information at each time instant. Then, we define strategies based on the compressed information and show that Bayesian Nash Equilibria (BNE) based on these strategies can be determined sequentially in time moving backwards, if each step of this backwards procedure has a solution. Finally, we provide an example where a BNE strategy profile based on compressed information exists.

We show that the proposed approach works for the case where the agents have no common observations and their actions are hidden.


\subsection{Related Literature}

Dynamic games with asymmetric information have been extensively investigated in the literature in the context of repeated discounted games; see \cite{zamir1992repeated,forges1992repeated,aumann1995repeated,mailath2006repeated} and the references therein. The key feature of these games is the absence of a dynamic system. Moreover, the works on repeated games study primarily their asymptotic properties when the horizon is infinite and agents are sufficiently patient (i.e. the discount factor is close one). In repeated games, agents play a stage (static) game repeatedly over time. The main objective of this strand of literature is to explore situations where agents can form self-enforcing punishment/reward mechanisms so as to create additional equilibria that improve upon the payoffs they can get by simply playing an equilibrium of the stage game over time. Recent works (see \cite{horner2011recursive,escobar2013efficiency,sugaya2012}) adopt approaches similar to those used in repeated games to study infinite horizon dynamic games with asymmetric information when there is an underlying dynamic Markovian system. Under certain conditions on the system dynamics and information structure, the authors of \cite{horner2011recursive,escobar2013efficiency,sugaya2012} characterize a set of asymptotic equilibria attained when the agents are sufficiently patient.

The problem we study in this paper is different from the ones in \cite{zamir1992repeated,forges1992repeated,aumann1995repeated,mailath2006repeated,horner2011recursive,escobar2013efficiency,sugaya2012} in two aspects. First, we consider a class of dynamic games where the underlying system has general Markovian dynamics and a general information structure, and we do not restrict attention to asymptotic behaviors when the horizon is infinite and the agents are sufficiently patient. Second, we study situations where the decision problem that each agent faces, in the absence of strategic interactions with other agents, is a Partially Observed Markov Decision Process (POMDP), which is a complex problem to solve by itself. Therefore, reaching (and computing) a set of equilibrium strategies, which take into account the strategic interactions among the agents, is a very challenging task. As a result, it is not very plausible for the agents to seek reaching  equilibria that are generated by the formation of self-enforcing punishment/reward mechanisms similar to those used in infinitely repeated games. We believe that our results provide new insight into the behavior of strategic agents in complex and dynamic environments, and complement the existing results in the repeated games literature.

Stochastic dynamic zero-sum games with asymmetric information have been studied in \cite{renault2006value,cardaliaguet2015markov,gensbittel2015value,li2017solving,kartik2021upper,zheng2013decomposition,li2014lp}. The authors of \cite{renault2006value,cardaliaguet2015markov,zheng2013decomposition,li2014lp} study zero-sum games with Markovian dynamics and lack of information on one side (i.e. one informed and one uninformed agent).
The authors of \cite{gensbittel2015value,li2017solving,kartik2021upper} study zero-sum games with Markovian dynamics and lack of information on both sides. 
The works of \cite{renault2006value,cardaliaguet2015markov,gensbittel2015value,li2017solving,kartik2021upper,zheng2013decomposition,li2014lp} consider specific information structures.
Specifically: the actions of both agents are publicly observed; in \cite{renault2006value,cardaliaguet2015markov,zheng2013decomposition,li2014lp} the informed agent observes perfectly the state of the dynamic system, the other agent has no direct observation of the system's state; in \cite{gensbittel2015value,li2017solving} each agent observes perfectly part of the system's state and the states observed by the two agents are either independent or conditionally independent (given the observed actions). The authors of \cite{kartik2021upper} consider a general information structure where each agent has some private information and the agents share some information about the dynamic system's state and their actions. The authors of \cite{renault2006value,cardaliaguet2015markov,gensbittel2015value,li2017solving,kartik2021upper,zheng2013decomposition,li2014lp} derive their results by taking advantage of properties of zero-sum games such as the interchangeability of equilibrium strategies and the unique value of the game. These properties do not extend to non-zero sum games. We study a general class of stochastic dynamic games that include zero-sum stochastic dynamic games with asymmetric information as a special case. We consider general Markovian dynamics for the underlying system in contrast to \cite{renault2006value,cardaliaguet2015markov,gensbittel2015value,li2017solving,zheng2013decomposition,li2014lp}, where the system has the special structure described above. We consider a general information structure that allows us to capture scenarios with unobservable actions and imperfect observations that are not captured by \cite{renault2006value,cardaliaguet2015markov,gensbittel2015value,li2017solving,zheng2013decomposition,li2014lp}.

The problems investigated in \cite{tang2022dynamic, nayyar2014Common, gupta2014common, ouyang2015CDC, ouyang2016TAC, vasal2016signaling, sinha2016structured, gupta2016dynamic,nayyar2013common} are the most closely related to our problem. The authors of \cite{nayyar2014Common, gupta2014common,gupta2016dynamic,nayyar2013common} study a class of dynamic games where the agents’ common information based belief (defined in \cite{nayyar2014Common}) is independent of their strategies, that is, there is no signaling among them. This property allows them to apply ideas from the common information approach developed in \cite{nayyar2011optimal, nayyar2013decentralized}, and define an equivalent dynamic game with symmetric information among fictitious agents. Consequently, they characterize a class of equilibria for dynamic games called Common Information based Markov Perfect Equilibria. 

Our results are different from those in \cite{nayyar2014Common, gupta2014common,gupta2016dynamic,nayyar2013common} in two aspects. First, we consider a general class of dynamic games where the agents' CIB beliefs are strategy-dependent, thus, signaling is present. Second, the proposed approach in \cite{nayyar2014Common, gupta2014common,gupta2016dynamic,nayyar2013common} requires the agents to keep track of all of their private information over time. We propose an approach to effectively compress the agents’ private information, and consequently, reduce the number of variables which the agents need to form CIB beliefs.

The authors of \cite{tang2022dynamic, ouyang2015CDC, ouyang2016TAC, vasal2016signaling, sinha2016structured} study a class of dynamic games with asymmetric information where signaling occurs. When the horizon in finite, the authors of \cite{ouyang2015CDC, ouyang2016TAC} introduce the notion of Common Information Based Perfect Bayesian Equilibrium, and provide a sequential decomposition of the game over time. The authors of \cite{vasal2016signaling, sinha2016structured} extend the results of \cite{ouyang2015CDC, ouyang2016TAC} to finite horizon Linear-Quadratic-Gaussian (LQG) dynamic games and infinite horizon dynamic games, respectively. 

The work of \cite{tang2022dynamic} extends the model of \cite{ouyang2016TAC} to games among teams of agents. Each agent has his own private information which he shares with the members of his own team with delay $d$; teams also have common information. The authors of \cite{tang2022dynamic} consider two classes of strategies: sufficient private information based (SPIB) strategies, which only compress private information, and sufficient private and common information based (SPCIB) strategies, which compress both common and private information. They show that SPIB-strategy-based BNE exist and the set of payoff profiles of such equilibria is the same as the set of all BNE. They develop a backward inductive sequential procedure, whose solution, if it exists, provides a SPCIB BNE, and identify instances which guarantee the existence of SPCIB BNE. The class of dynamic games studied in 
\cite{tang2022dynamic, ouyang2015CDC, ouyang2016TAC, vasal2016signaling, sinha2016structured}
satisfy the following assumptions: (i) agents’ actions are observable (ii) each agent has a perfect observation of his own local states/type (iii) conditioned on the agents’ actions, the evolution of the local states are independent.
We relax assumptions (i)-(iii) of \cite{tang2022dynamic, ouyang2015CDC, ouyang2016TAC, vasal2016signaling, sinha2016structured}, and study a general class of dynamic games with asymmetric information, hidden actions, imperfect observations, and controlled and coupled dynamics. 

\subsection{Contribution}
We study/analyze, in discrete time, a general class of sequential stochastic dynamic games with asymmetric information, where the underlying system is dynamic, the information structure is non-classical, at each time instant the agents have private and common information and their actions are hidden (each agent's actions are not directly observable by the other agents). Our key contribution is a methodology for the discovery of Bayesian Nash Equilibrium (BNE) strategy profiles that are based on the agents' compressed private and common information and can be determined sequentially in time moving backwards, if each step of this backward procedure has a solution. We present an example where such a BNE strategy profile exists. 
We show that our methodology works also for the case where the agents have no common observations and their actions are hidden.

\subsection{Organization}
The rest of the paper is organized as follows: We present the game's model along with the equilibrium concept in Section \ref{sec:model}. We state our objective and present the methodology that achieves it in Section \ref{sec:Methodology}. In Section \ref{sec:compression} we first introduce compressed versions of the agents' private and common information that are sufficient for decision making purposes; then we define Sufficient Information Based (SIB) strategies that are based on the agents' compressed information. In Section \ref{sec:sequential_decomposition} we first introduce Sufficient Information Based Bayesian Nash Equilibrium (SIB-BNE); then we present a sequential decomposition of the game, that is, a backward inductive procedure that determines SIB-BNE if each step of this procedure has a solution. In Section \ref{sec:example} we present an example that highlights our solution methodology and where a SIB-BNE exists. In Section \ref{sec:specialcase} we show that our solution methodology works for stochastic dynamic games where the agents have no common observations and each agent's actions are part of his private information. 
The comparison of the definitions of compressed private information as it appears in this paper and in \cite{companion}, along with
some of the technical details related to the existence of SIB-BNE for the example of Section \ref{sec:example} are presented in the Appendices.

\section{Model}
\label{sec:model}
We present our model for dynamic decision problems with strategic agents (dynamic games) below; this model is an analogue to the model of \cite{companion} for dynamic decision problems with non-strategic agents.

\subsection{System Dynamics} There are $N$ strategic agents who live in a dynamic Markovian world over horizon $\mathcal{T}\hspace*{-2pt}:=\hspace*{-2pt}\{1,2,...,T\}$, $T\hspace*{-2pt}<\hspace*{-2pt}\infty$. Let $X_t\hspace*{-2pt}\in\hspace*{-2pt}\mathcal{X}_t$ denote the state of the world at $t\hspace*{-2pt}\in\hspace*{-2pt}\mathcal{T}$. At time $t$, each agent, indexed by $i\hspace*{-2pt}\in\hspace*{-2pt} \mathcal{N}\hspace*{-2pt}:=\hspace*{-2pt}\{1,2,...,N\}$, chooses an action $a^i_t\hspace*{-2pt}\in\hspace*{-2pt}\mathcal{A}^i_t$, where  $\mathcal{A}^i_t$ denotes the set of available actions to him at $t$. Given the collective action profile $A_t\hspace*{-2pt}:=\hspace*{-2pt}(A_t^1,...,A_t^N)$, the state of the world evolves according to the following stochastic dynamic equation,\vspace*{-2pt}
\begin{align}
X_{t+1}=f_t(X_t,A_t,W_t^x), \label{eq:systemdynamic1} \vspace*{-2pt}
\end{align} 
where $W_{1:T-1}^x$ is a sequence of independent random variables. 
The initial state $X_1$ is a random variable that has a probability distribution $\mu_0\in\Delta(\mathcal{X}_1)$. 

At every time $t\in\mathcal{T}$, before taking an action, agent $i$ receives a noisy private observation $Y_t^i\in\mathcal{Y}_t^i$ of the current state of the world $X_t$ and the action profile $A_{t-1}$, given by\vspace*{-2pt}
\begin{align}
Y_t^i=O_t^i(X_t,A_{t-1},W_t^i), \label{eq:systemdynamic2}\vspace*{-2pt}
\end{align} 
where $W_{1:T}^i$, $i\in\mathcal{N}$, are sequences of independent random variables. Moreover, at every $t\in\mathcal{T}$, all agents receive a common observation $Z_t\in\mathcal{Z}_t$ of the current state of the world $X_t$ and the action profile $A_{t-1}$, given by\vspace*{-2pt}
\begin{align}
Z_t=O_t^c(X_t,A_{t-1},W_t^c), \label{eq:systemdynamic3}\vspace*{-3pt}
\end{align} 
where $W_{1:T}^c$, is a sequence of independent random variables. 
We assume that the random variables $X_1$, $W_{1:T-1}^x$, $W_{1:T}^c$, and $W_{1:T}^i$, $i\in\mathcal{N}$ are mutually independent. 

To avoid measure-theoretic technical difficulties and for clarity and convenience of exposition, we assume that all the random variables take values in finite sets.
\begin{assumption}\label{assump:finite}(finite game)
	The sets $\mathcal{N}$, $\mathcal{X}_t$, $\mathcal{Z}_t$, $\mathcal{Y}_t^i$, $\mathcal{A}_t^i$, $ i \in \mathcal N$, are finite.
\end{assumption}

\subsection{ Information Structure} 
Let $H_t$ denote the aggregate information of all agents at time $t$. Assuming that agents have perfect recall, we have $H_t=\{Z_{1:t},Y_{1:t}^{1:N},A_{1:t-1}^{1:N}\}$, \textit{i.e.} $H_t$ denotes the set of all agents' past and present observations and all agents' past actions. The set of all possible realizations of the agents' aggregate information is given by $\mathcal{H}_t:=\prod_{\tau\leq t}\mathcal{Z}_\tau\times\prod_{i\in\mathcal{N}}\prod_{\tau\leq t}\mathcal{Y}_\tau^i\times \prod_{i\in\mathcal{N}}\prod_{\tau< t}\mathcal{A}_\tau^i$. 

At time $t\hspace*{-2pt}\in\hspace*{-2pt}\mathcal{T}$, the aggregate information $H_t$ is not fully known to all agents. 
Let $C_t\hspace*{-2pt}:=\hspace*{-2pt}\{Z_{1:t}\}\hspace*{-2pt}\in\hspace*{-2pt}\mathcal{C}_t$ denote the agents' common information  about $H_t$ and $P_t^i\hspace*{-2pt}:=\hspace*{-2pt}\{Y_{1:t}^i,A_{1:t-1}^i\}\backslash C_t\hspace*{-2pt}\in\hspace*{-2pt}\mathcal{P}_t^i$ denote agent $i$'s private information about $H_t$, where $\mathcal{P}_t^i$ and $\mathcal{C}_t$ denote the set of all possible realizations of agent $i$'s private and common information at time $t$, respectively. 
We assume that observations $Y_\tau^i$, $\tau\in\{1,2...,t\}$, and actions $A_\tau^i$, $\tau\in\{1,2...,t-1\}$, are known to agent $i$ but are not necessarily fully known to all other agents, denoted by $-i$, at $t\in\mathcal{T}$. Therefore, we have $P_t^i\subseteq \{Y_{1:t}^i,A_{1:t-1}^i\}$ for all $i\in\mathcal{N}$, and $H_t=\left(\bigcup_{i\in\mathcal{N}}P_t^i\right)\cup C_t$ for all $t\in\mathcal{T}$. As such, $\left\{C_t,P_t^i,i\in\mathcal{N}\right\}$ form a partition of $\mathcal{H}_t$ at every time $t\in\mathcal{T}$. 
In Section \ref{sec:model:special}, we discuss several instances of information structures  that can be captured as special cases of our model.

\subsection{Strategies and Utilities:} Let $H_t^i:=\{C_t,P_t^i\}\in \mathcal{H}_t^i$ denote the information available to agent $i$ at $t$, where $\mathcal{H}_t^i$ denote the set of all possible realizations of agent $i$'s information at $t$. Agent $i$'s \textit{behavioral strategy} at $t$, denoted by $g_t^i$, is defined by
\begin{align}
    g^i_t:\mathcal{H}_t^i\rightarrow \Delta (\mathcal{A}_t^i)
    \label{eq:git}
\end{align}
where $\Delta (\mathcal{A}_t^i)$ is the set of Probability Mass Functions (PMFs) on $\mathcal{A}_t^i$.
We denote by 
\begin{align}
    g^i := (g^i_1,g^i_2,\ldots, g^i_T)
    \label{eq:gi}
\end{align}
a strategy of agent $i$; $g^i \in \mathcal G^i$, where $\mathcal G^i$ is the set of admissible strategies described by \eqref{eq:git}-\eqref{eq:gi}. 
We denote a strategy profile $g$ by 
\begin{align}
    g:= (g^1,g^2,\ldots,g^N)
    \label{eq:g}
\end{align}
$g \in \mathcal G$, where $\mathcal G$ is the set of admissible strategy profiles described by \eqref{eq:git}-\eqref{eq:g}. We denote by 
\begin{align}
    g^{-i}:= (g^1,\ldots,g^{i-1},g^{i+1},\ldots,g^N)
    \label{eq:gminusi}
\end{align}

Agent $i$'s instantaneous utility at $t$ depends on the system state $X_t$ and the collective action profile $A_t$, and is given by $u_t^i\hspace*{-1pt}(\hspace*{-1pt}X_t,\hspace*{-1pt}A_t\hspace*{-1pt})$. Agent $i$'s total utility over horizon $\mathcal{T}$, is given by,\vspace*{-2pt}
\begin{align}
U^i(X_{1:T},A_{1:T})=\sum_{t\in\mathcal{T}}u_t^i(X_t,A_t). \label{eq:totalutility}\vspace*{-2pt}
\end{align}

\subsection{Equilibrium Concept:} We consider Bayesian Nash Equilibrium (BNE) as the solution concept \citep{fudenberg1991game}. 
A strategy profile $g^* = (g^{*1}, g^{*2}, \ldots,g^{*N})$ is a BNE if for all $i \in \mathcal N$
\begin{align}
    \mathbb{E}^{g^*}\{U^i(X_{1:T},A_{1:T})\} \geq \mathbb{E}^{g^{*-i},\hat{g}^{i}}\{U^i(X_{1:T},A_{1:T})\},\quad\hspace{-4pt} \forall \hat{g}^i \in \mathcal G^i.
\end{align}

\subsection{Special Cases}\label{sec:model:special}
We discuss several instances of dynamic games with asymmetric information that are special cases of the general model described above.  

\vspace{3pt}

\textit{1) Nested information structure:} Consider a two-player game with one informed player and one uninformed player and general Markovian dynamics. At every time $t\hspace*{-2pt}\in\hspace*{-2pt} \mathcal{T}$, 
the informed player makes a private perfect observation of the state $X_t$, \textit{i.e.} $Y_t^1\hspace*{-2pt}=\hspace*{-2pt}X_t$. The uninformed player does not have any observation of the state $X_t$. Both the informed and uninformed players observe each others' actions, \textit{i.e.} $Z_t\hspace*{-2pt}=\hspace*{-2pt}\{A_{t-1}\}$. Therefore, we have $P_t^1=\{X_{1:t}\}$, $P_t^2=\emptyset$, and $C_t\hspace*{-2pt}=\hspace*{-2pt}\{A_{1:t-1}^1\hspace*{-1pt},\hspace*{-1pt}A_{1:t-1}^2\}$ for all $t\hspace*{-2pt}\in\hspace*{-2pt}\mathcal{T}$. The above nested information structure corresponds to dynamic games considered in \cite{renault2006value,cardaliaguet2015markov,renault2012value,li2014lp,li2017efficient,zheng2013decomposition}, where in \cite{renault2012value,li2017efficient} the state $X_t$ is static.

\vspace{3pt}

\textit{2) Delayed sharing information structure:} Consider a $N$-player game with observable actions where agents observe each others' observations with $d$-step delay. That is, $P_t^i=\{Y_{t-d+1:t}^i\}$ and $C_t=\{Y_{1:t-d},A_{1:t-1}\}$. We note that in our model we assume that the agents' common observation $Z_t$ at $t$ is only a function of $X_t$ and and $A_{t-1}$. Therefore, to describe the game with delayed sharing information structure within the context of our model we need to augment our state space to include the agents' last $d$ observations as part of the augmented state. Define $\tilde{X}_t:=\{X_t,M^1_t,M^2_t,...,M^d_t\}$ as the augmented system state where $M_t^i:=\{A_{t-i},Y_{t-i}\}\in\mathcal{A}_{t-i}\times\mathcal{Y}_{t-i}$, $i\in\mathcal{N}$; that is, $M_t^i$ serves as a temporal memory for the agents' observation $Y_{t-i}$ at $t-i$. Then, we have $\tilde{X}_{t+1}=\{X_{t+1},M_{t+1}^1,M_{t+1}^2,...,M_{t+1}^d\}=\{f_t(X_t,A_t,W_t^x),(Y_t),M_t^1,...,M_t^{d-1}\}$ and $Z_t=\{M_t^d, A_{t-1}\}=\{Y_{t-d}, A_{t-1}\}$.

The above environment captures a connection between the symmetric information structure and asymmetric information structure. The information asymmetry among the agents increases as $d$ increases. The above delayed sharing information structure corresponds to the dynamic game considered in \cite{tavafoghi2016stochastic}.

\vspace{5pt}

\textit{3) Perfectly controlled dynamics with hidden actions:} Consider a $N$-player game where the state $X_t\hspace*{-2pt}:=\hspace*{-2pt}(X_t^1\hspace*{-1pt},\hspace*{-1pt}X_t^2\hspace*{-1pt},\hspace*{-1pt}...,\hspace*{-1pt}X_t^N)$ has $N$ components. Agent $i$, $i\hspace*{-2pt}\in\hspace*{-2pt}\mathcal{N}$, perfectly controls $X_t^i$, \textit{i.e.} $X_{t+1}^i=A_t^i$. Agent $i$'s actions $A_t^i$, $t\hspace*{-2pt}\in\hspace*{-2pt}\mathcal{T}$, are not observable by all other agents $-i$. Every agent $i$, $i\hspace*{-2pt}\in\hspace*{-2pt}\mathcal{N}$, makes a noisy private observation $Y_i^t(X_t,W_t^i)$ of the system state at $t\hspace*{-2pt}\in\hspace*{-2pt}\mathcal{T}$. Therefore, we have $P_t^i\hspace*{-2pt}:=\hspace*{-2pt}\{A_{1:t},Y_{1:t}^i\}$, $C_t\hspace*{-2pt}=\hspace*{-2pt}\emptyset$.

\section{Objective and Methodology}
\label{sec:Methodology}

\subsection{Objective}
Our objective is twofold: (i) To determine BNE strategy profiles that are based on compressed versions of the agents' private and common information. (ii) To compute the above-mentioned strategy profiles by a sequential decomposition of the game, that is, by a backward inductive sequential procedure that identifies an equilibrium strategy profile when every step of the procedure has a solution.

\subsection{Methodology}
We present a methodology that achieves the above-state objective and proceeds as follows:
\begin{itemize}
    \item Step 1. We determine a mutually consistent compression of the agents' private information that is sufficient for decision-making purposes (such a mutually consistent compression may not be unique). Based on this compression we introduce the Sufficient Private Information Based (SPIB) belief system.
    \item Step 2. Based on the result of Step 1, we determine a compression of the agents' common information that is sufficient for decision-making purposes by defining the Common Information Based (CIB) belief system. The CIB belief system ensures that at each time instant each agent's CIB belief is consistent with his SPIB belief even when the agent deviates from his equilibrium strategy and plays an arbitrary strategy. Such a consistency implies that each agent forms his own CIB belief system, and each agent's CIB belief system is common knowledge among all agents.
    \item Step 3. Based on the compression of the agents' private and common information we introduce Sufficient Information Based (SIB) strategies for each agent (i.e., strategies that depend at each time on the agent's sufficient private information and the CIB belief system) and SIB BNE. We show that SIB strategies satisfy a key closedness of best response property. Based on this property we provide a sequential decomposition of the game, that is, a backward inductive sequential procedure that determines a SIB BNE if each step of the procedure has a solution.
    \item Step 4. We provide an example of a stochastic dynamic game with asymmetric information and hidden/unobservable actions where a SIB BNE exists.
\end{itemize}

\section{Compression of Private and Common Information}
\label{sec:compression}

In Section \ref{subsec:privatecompress}
we characterize/determine mutually consistent compressions of all agents' private information that are sufficient for decision-making purposes.
In Section \ref{subsec:SIBbelief} we introduce the common information based belief, a compressed version of the agents' common information, that is sufficient for decision making purposes.

\subsection{Sufficient private information (Step 1)}
\label{subsec:privatecompress}

We present/consider a compression of the agents' private information that is done in a mutually consistent manner so that the compressed information is sufficient for decision making purposes.

\begin{definition}[Sufficient private information]\label{def:sufficient}
We say that $S^i_t, i=1,\ldots,N$, is sufficient private information for the agents if
\begin{enumerate}[(i)]
\item $S^i_t$ is a function of $H^i_t$ such that
$S^i_t = \zeta^i_t(H^i_t)$  for some commonly known functions $\zeta^i_t, i=1,2,\ldots,N$.

\item $S^i_t$ can be sequentially updated as
$S^i_t = \phi^i_t(S^i_{t-1}, Y^i_t, Z_t, A^i_{t-1})$ using some commonly known functions $\phi^i_t,i=1,2,\ldots,N$.

\item For any realization $x_t, p^{-i}_t, p^i_t, c_t$, and the corresponding $s^{-i}_t =\zeta^{-i}_t(p^{-i}_t, c_t)$ and $s^i_t =\zeta^i_t(p^i_t, c_t)$, and any strategy profile $g$, where $g_t^i:\mathcal{S}_t^i\times C_t\rightarrow \Delta(\mathcal{A}_t^i),\forall i,\forall t$,  such that $\prob^g(p^i_t, c_t) > 0$,

\begin{align}
    \prob^g(x_t, s^{-i}_t \mid s^i_t, c_t) 
    = \prob^g(x_t, s^{-i}_t \mid p^i_t, c_t)
    \label{eq:sufficient-3}
\end{align}

\end{enumerate}
\end{definition}


\begin{remark}
A similar definition of sufficient private information for dynamic teams appears in \cite[Definition 2]{companion}. 
This definition is slightly different from Definition \ref{def:sufficient} above because the objectives in \cite{companion} and this paper are different.
In Appendix \ref{app:sufficient_information} we show that sufficient private information satisfying Definition \ref{def:sufficient} may violate condition (ii) of Definition 2 in \cite{companion}.
In \cite{companion} the compression of private (and common) information must entail no loss in performance, that is, we must be able to determine globally optimal team strategy profiles that are based on compressed private and common information. In this paper the goal is to determine BNE strategy profiles that are based on compressed information and be sequentially computed (if such BNE strategy profiles exist). We are not concerned about the equilibria we may lose when we compress information; therefore, we don't need condition (ii) of Definition 2 in \cite{companion}. 
\end{remark}

Definition \ref{def:sufficient} characterizes a set of compressions for agents' private information. In the following, we show the set of sufficient private information $S_t^i$, $i\in\mathcal{N}$, $t\in\mathcal{N}$, is rich enough to form belief systems on information sets of realizations with positive or zero probability.
Let $\tilde g^i$ denote the uniform strategy that assigns equal probability to every action of agent $i \in \mathcal N$.
Below we show that the policy-independence property of belief \cite[Theorem 1]{companion} for agent $i$ is still true when the private information $p_t^i$ is replaced with the sufficient private information $s_t^i$. That is, $\prob^{\tilde g^i, g^{-i}}(x_t, x^{-i}_t \mid s^i_t, c_t)$ constructed by $(\tilde g^i, g^{-i})$ captures agent $i$'s belief based on $h^i_t$ even when he plays an arbitrary strategy $\hat{g}^i$, not necessarily the same as $g^i$ or $\tilde g^i$, provided that agents $-i$ play $g^{-i}$.

\begin{lemma}
\label{prop:policy-independent}
For $h^i_t$ such that $\prob^{\hat g^i, g^{-i}}(h^i_t) > 0$, we have $\prob^{\tilde g^i, g^{-i}}(h^i_t) > 0$ and
\begin{align}
    \prob^{\hat{g}^i,g^{-i}}(x_t, s^{-i}_{t} \mid h^i_t) = \prob^{\tilde{g}^i, g^{-i}}(x_t, s^{-i}_{t} \mid h^i_t)    
    = 
    \prob^{\tilde{g}^i, g^{-i}}(x_t, s^{-i}_{t} \mid s^i_t, c_t) .
\end{align}
\end{lemma}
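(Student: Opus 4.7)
The claim has three parts: positivity of $\prob^{\tilde g^i, g^{-i}}(h^i_t)$, a first equality expressing policy-independence of agent $i$'s belief, and a second equality asserting that $(s^i_t, c_t)$ suffices for agent $i$'s belief about $(x_t, s^{-i}_t)$. The plan is to handle each part in turn, then explain where Definition \ref{def:sufficient}(iii) enters.

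For positivity and the first equality, I would write out the causal factorization of $\prob^{\hat g^i, g^{-i}}(h^i_t, x_t, s^{-i}_t)$ as a sum over trajectories of $(x_{1:t}, a_{1:t-1}, y_{1:t}^{1:N}, z_{1:t})$ consistent with $h^i_t$, each trajectory contributing a product of initial-state, transition, and observation factors (all strategy-free) together with conditional action probabilities, namely $\prod_{\tau<t}\hat g^i_\tau(a^i_\tau \mid h^i_\tau)$ from agent $i$ and analogous factors from $g^{-i}$. The only $\hat g^i$-dependent factors are the $\hat g^i_\tau(a^i_\tau \mid h^i_\tau)$'s, and each depends exclusively on data already contained in $h^i_t$. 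Hence they factor identically out of each trajectory's contribution to both the numerator $\prob(x_t, s^{-i}_t, h^i_t)$ and the denominator $\prob(h^i_t)$, and cancel in the ratio; this yields the first equality. Positivity follows from the same factorization: if $\prob^{\hat g^i, g^{-i}}(h^i_t) > 0$, at least one trajectory contribution is strictly positive, and since $\tilde g^i$ is strictly positive on every action, the contribution of that same trajectory under $(\tilde g^i, g^{-i})$ is also strictly positive.

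For the second equality, I would directly invoke condition (iii) of Definition \ref{def:sufficient} applied to the strategy profile $(\tilde g^i, g^{-i})$. The uniform strategy $\tilde g^i$ is constant and so trivially fits the SIB form $\mathcal{S}^i_t \times \mathcal{C}_t \to \Delta(\mathcal{A}^i_t)$; taking $g^{-i}$ to be SIB as well (the regime in which this lemma is deployed), the hypotheses of (iii) are met. Since $h^i_t = (p^i_t, c_t)$, the conclusion of (iii) is exactly the second equality. The main obstacle I expect is precisely this alignment of hypotheses: Definition \ref{def:sufficient}(iii) is phrased for SIB strategy profiles, whereas the lemma's statement allows an arbitrary $\hat g^i$. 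The first equality is what resolves this mismatch: it reduces the conditional probability computed under $(\hat g^i, g^{-i})$ to the same quantity under $(\tilde g^i, g^{-i})$, a strategy pair to which Definition \ref{def:sufficient}(iii) applies verbatim.
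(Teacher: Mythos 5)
Your proposal is correct and takes essentially the same route as the paper: the first equality and the positivity claim come from the policy-independence of agent $i$'s belief in his own strategy (which the paper obtains by citing Theorem 1(i) of \cite{companion} rather than re-deriving the trajectory-factorization/cancellation argument you spell out), and the second equality comes from condition (iii) of Definition \ref{def:sufficient} applied to the profile $(\tilde g^i, g^{-i})$. Your explicit observation that $g^{-i}$ must itself be of the compressed $\mathcal{S}_t^{-i}\times\mathcal{C}_t$ form for condition (iii) to apply---true in the regime where the lemma is used---is a hypothesis the paper leaves implicit, so flagging it is a small improvement rather than a deviation.
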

\begin{proof}
Note that $\prob^{\tilde g^i}(a^i_t) = 1 / \vert\mathcal A^i_t \vert$, so $\prob^{\tilde g^i, g^{-i}}(h^i_t) > 0$ given that $\prob^{g}(h^i_t) > 0$. Then
from part (i) of the definition of sufficient private information and part (i)  of Theorem 1 in \cite{companion} we have
\begin{align}
    \prob^{\hat{g}^i, g^{-i}}(x_t, s^{-i}_t \mid h^i_t) 
    = &\sum_{h^{-i}_{t}: \zeta^{-i}_t(h^{-i}_{t}) = s^{-i}_t }\prob^{\hat{g}^i, g^{-i}}(x_t, h^{-i}_{t} \mid h^i_t)
    \notag\\
    = & \sum_{h^{-i}_{t}: \zeta^{-i}_t(h^{-i}_{t}) = s^{-i}_t } \prob^{\tilde{g}^i, g^{-i}}(x_t,h_t^{-i} \mid h_t^i) 
    \notag\\
    = &
    \prob^{\tilde{g}^i, g^{-i}}(x_t, s^{-i}_{t} \mid h^i_t).
\end{align}
Furthermore, from condition (iii) of the definition of sufficient private information we have 
\begin{align}
    \prob^{\tilde{g}^i, g^{-i}}(x_t, s^{-i}_{t} \mid h^i_t)
    = \prob^{\tilde{g}^i, g^{-i}}(x_t, s^{-i}_{t} \mid s^i_t, c_t).
\end{align}
\end{proof}

\subsection{CIB Belief System (Step 2)}
\label{subsec:SIBbelief}

Given the compressed private information, we next compress the agents' common information in the form of a belief system.
We call such a compressed belief system the Common Information Based (CIB)  belief system.
Similar to \cite{tang2022dynamic, ouyang2016TAC}, the CIB belief system is sufficient for decision-making if it is common knowledge among all agents, and every agent $i$ can compute his belief about the system state and the other agents' sufficient private information using the CIB belief system and his compressed private information. More specifically, 
agent $i$ should be able to compute $\prob^{\hat g^i, g^{-i}}(x_t, s_t \mid h^i_t)$ using the CIB belief system and his sufficient private information $s^i_t$ whenever other agents follow the strategy profile $g^{-i}$ and agent $i$ plays an arbitrary strategy $\hat g^i$.


To determine a CIB belief system that satisfies the above sufficiency requirement we proceed as follows. We first define $N$ CIB belief systems $\Pi^\psi:= \{\Pi^{\psi, 1}, \Pi^{\psi, 2}, \ldots, \Pi^{\psi, N}\}$, one for each agent (Definition \ref{def:CIB_belief_system} below). Each belief system $\Pi^{\psi, i}$ consists of a sequence of PMFs on $\mathcal X_t \times \mathcal S_t$ that are sequentially updated according to an update rule $\psi=(\psi^1, \psi^2,\ldots,\psi^N)$ that is common knowledge among the agents; 
for each realization $c_t$ of the common information available at $t$, $\pi^{\psi,i}_t$ describes the belief on $\mathcal X_t \times \mathcal S_t$ based on $c_t$ from agent $i$'s point of view.
We want $\pi^{\psi,i}_t$, combined with $s^i_t$, to enable agent $i$ to form his own sufficient information-based private belief (given by $\prob^{\hat g^i, g^{*-i}}(x_t, s_t \mid s^i_t, c_t)$) about the current status of the game. Furthermore, we want the CIB belief system to capture the current status of the game when agents utilize strategies based on $(S_t, \Pi^\psi_t)$. For that matter, we define the notion/concept of Sufficient Information Based (SIB) strategy profile
$\sigma:= (\sigma^{i},i \in \mathcal N)$, $\sigma^i:= (\sigma^i_t, t \in \mathcal T), i \in \mathcal N$. Each component $\sigma^i_t$ of $\sigma$ is a function of $s^i_t$, agent $i$'s sufficient private information at $t$, and $\pi^\psi_t = (\pi^{\psi,i}_t, i \in \mathcal N)$ (see Definition \ref{def:SIB_strategy_n} below). Using the $N$ CIB belief systems and the SIB strategy profile $\sigma$ we define update equations for each $\pi^{\psi,i}_t$ so that each $\pi^{\psi,i}_t$ is consistent with $s^i_t$ and with agent $i$'s sufficient private information-based belief $\prob^{\hat g^i, g^{*-i}}(x_t, s_t \mid s^i_t, c_t)$, defined in Section \ref{subsec:privatecompress} (Definition \ref{def:sufficient}), and each $\pi^{\psi,i}_t$ is common knowledge among all agents (see Definition \ref{def:CIB_update_n} below). We proceed with the (formal) definitions.

\begin{definition}[Common information based (CIB) belief system]
\label{def:CIB_belief_system}

Given a sequence of update functions $\psi =\{\psi^{i}_t, i \in \mathcal N, t \in \mathcal T\}$
that are common knowledge among the $N$ agents, sequentially define
\begin{align}
    \Pi_{t}^{\psi,i} = \psi_{t}^i(\Pi^{\psi}_{t-1},Z_{t}), i \in \mathcal N, t \in \mathcal T
\end{align}
where 
\begin{align}
    &\Pi^{\psi}_t:= \left[\begin{array}{c}
     \Pi^{\psi, 1}_t  \\
     \vdots\\
     \Pi^{\psi, N}_t
    \end{array}\right], t \in \mathcal T
\\
    & \Pi^{\psi}_0:=  \left[\begin{array}{c}
     \mu_0  \\
     \vdots\\
     \mu_0
    \end{array}\right]
\end{align}
The sequence $\Pi^\psi_{1:T} = (\Pi^\psi_1, \Pi^\psi_2, \ldots, \Pi^\psi_T)$ defines a CIB belief system; $\Pi^{\psi, i}_t$ denotes the CIB belief over $\mathcal X_t \times \mathcal S_t$ based on $C_t$ from agent $i$'s point of view.
\label{def:SIB_belief_new_n}
\end{definition}

\begin{definition}[SIB strategy]
\label{def:SIB_strategy_n}
Given a CIB belief system $\Pi^\psi_{1:T}$, we define a Sufficient Information Based (SIB) strategy profile $\sigma:= (\sigma^1, \sigma^2,\ldots, \sigma^N)$, $\sigma^i:= (\sigma^i_1, \sigma^i_2, \ldots, \sigma^i_T)$ by the maps
\begin{align}
\sigma_t^i:\mathcal{S}^i_t\times [\Delta(\mathcal{X}_t\times \mathcal{S}_t)]^N \rightarrow \Delta(\mathcal{A}_t), 
t=1,2,\ldots, i=1,2,\ldots,N.
\end{align}
\end{definition}

Based on Definitions \ref{def:CIB_belief_system} and \ref{def:SIB_strategy_n} we present a set of conditions that an individual CIB belief system $(\Pi^{\psi,i}_t, t \in\mathcal T)$ must satisfy so as to ensure that each agent $i$ can form his own (private) belief about the current status of the game, given by $(X_t, S_t)$, using $\Pi^\psi_t$ and $S^i_t$ when all other agents $-i$ employ SIB strategies $\sigma^{-i}$. This set of conditions describe a sequential update rule of $\Pi^{\psi,i}_t$; the update rule depends on whether or not the (new) common observation at $t$ is feasible under the agents' strategies.

\begin{definition}[Consistent CIB belief system]
\label{def:CIB_update_n}
Consider a SIB strategy profile $\sigma$.
Let $F_{t}^i(x_{t+1},s_{t+1},z_{t+1})(\pi^\psi_t;\sigma^{-i}_t)$ denote the CIB belief about $(x_{t+1},s_{t+1},z_{t+1})$ constructed recursively by assuming that (i) $(x_t,s_t)$ is distributed according to $\pi_t^{\psi,i}$ (ii) agent $i$ employs the uniform strategy $\tilde g^i$ at $t$ (i.e., the strategy that chooses every action $a^i_t \in \mathcal A^i_t$ with equal probability), and (iii) agent $-i$ plays according $\sigma_t^{-i}$. That is,
\begin{align}
F_{0}^i(x_{1},s_{1},z_{1})
=& \sum_{y_{1}} \Bigg[\mathbb{P}\{z_{1},y_{1} \mid x_{1}\}\mu_0(x_{1})\left(\prod_{j}\mathbbm{1}\{s_{1}^j = \phi_{1}^j(z_{1},y_{1}^j)\}\right) \Bigg]
\end{align}
at $t=1$, and for $t \geq 1$.

\begin{align}
& F_{t}^i(x_{t+1},s_{t+1},z_{t+1})(\pi^\psi_t;\sigma^{-i}_t)
\notag\\
=& \sum_{y_{t+1},x_t,s_t,a_t} \Bigg[\mathbb{P}
\{z_{t+1},y_{t+1},x_{t+1} \mid  x_t,a_t\}\left(\prod_{j}\mathbbm{1}\{s_{t+1}^j = \phi_{t+1}^j(s_t^j,z_{t+1},y_{t+1}^j,a_t^j)\}\right)\nonumber\\
& 
\hspace{50pt} \left(\frac{1}{ \mid A_t^i \mid}\prod_{j\neq i} \sigma^j_t(a^j_t)(\pi^\psi_t,s_t^j) \right) \pi_t^{\psi,i}(x_t,s_t) \Bigg]
\end{align}

We define the update rule $\psi^\sigma = (\psi^{\sigma,i}_t, i \in \mathcal N, t \in \mathcal T)$ and the corresponding CIB belief system $\Pi^{\psi^\sigma}_{1:T}$ as follows. At any $t$ 

\begin{enumerate}[(i)]
    \item If $\sum_{\hat{x}_{t+1},\hat{s}_{t+1}} F_{t}^i(\hat{x}_{t+1},\hat{s}_{t+1},z_{t+1})(\pi^{\psi^\sigma}_t;\sigma^{-i}_t)>0$ (\textit{i.e.} the new common observation $z_{t+1}$ is feasible from the agent $i$'s point of view), then $\pi^{\psi^\sigma, i}_{t+1}$ can be updated recursively as
\begin{align}
\pi_{t+1}^{\psi^{\sigma},i}(x_{t+1},s_{t+1}) = \frac{F_{t}^i(x_{t+1},s_{t+1},z_{t+1})(\pi^{\psi^\sigma}_t;\sigma^{-i}_t)}{\sum_{\hat{x}_{t+1},\hat{s}_{t+1}} F_{t}^i(\hat{x}_{t+1},\hat{s}_{t+1},z_{t+1})(\pi_t^{\psi^\sigma};\sigma^{-i}_t)},
\label{eq:cib_bayesrule}
\end{align}
via Bayes rule.
\item If $\sum_{\hat{x}_{t+1},\hat{s}_{t+1}} F_{t}^i(\hat{x}_{t+1},\hat{s}_{t+1},z_{t+1})(\pi^{\psi^\sigma}_t;\sigma^{-i}_t)
=0$ (\textit{i.e.} the new common observation $z_{t+1}$ is infeasible from the agent $i$'s point of view), then the update rule is 
\begin{align}
\pi_{t+1}^{\psi^\sigma,i}(x_{t+1},s_{t+1}) = \frac{1}{\left\vert\mathcal{X}_{t+1}\times\mathcal{S}_{t+1}\right\vert}.
\label{eq:update-2_n}
\end{align}
\end{enumerate}
\end{definition}
Based on \eqref{eq:cib_bayesrule} and \eqref{eq:update-2_n} we can write

\begin{align}
    &\Pi_{t+1}^{\psi^\sigma,i} = \psi_{t+1}^{\sigma,i}(\Pi_t^{\psi^\sigma},Z_{t+1}).
    \label{eq:psi_it}
\\
    &\Pi_{t+1}^{\psi^\sigma} = \psi^{\sigma}_{t+1}(\Pi_t^{\psi^\sigma},Z_{t+1}).
    \label{eq:psi_t}
\end{align}

Furthermore, for all $i\in\mathcal N$, each agent can determine if 
$\sum_{\hat{x}_{t+1},\hat{s}_{t+1}} F_t^i(\hat{x}_{t+1},\hat{s}_{t+1},z_{t+1})(\pi_t^{\sigma^\psi};\sigma^{-i}_t)$ is positive or zero; thus each agent knows how agent $i$ computes $\pi^{\psi^\sigma, i}_{t+1}$ from $\sigma^i_t, z_{t+1}, \sigma^{-i}_t$ and $\psi^\sigma$. Therefore, $\pi^{\psi^\sigma, i}_{t}$ (hence $\pi^{\psi^\sigma}_{t}$) is common knowledge among all agents. We call $\Pi^{\psi^\sigma}_{1:T}$ the CIB belief system consistent with the SIB strategy profile $\sigma$.

\begin{remark}
Since the sufficient private information is a function of the agent's available information, a SIB strategy $\sigma^i_t$ corresponds to a strategy $g_t^{i, \sigma}$ given by
$g_t^{i, \sigma}(h^i_t) := \sigma^i_t(\zeta^i_t(h^i_t), \pi_t^{\psi^\sigma})$. 
Therefore, in the rest of the paper we use the following convention: $\prob^{\sigma}(\cdot) = \prob^{g^\sigma}(\cdot)$ and $\ee^{\sigma}[\cdot] = \ee^{g^\sigma}[\cdot]$.
\end{remark}

\begin{remark}
There are many alternative specifications of the update rule $\psi^\sigma_{t}, t \in \mathcal T$ defined by \eqref{eq:psi_it}-\eqref{eq:psi_t}, that result in consistent CIB belief systems, that is, CIB belief systems which ensure that (i) agent $i$ can form his private belief over $(X_t, S^{-i}_t)$ by incorporating his private sufficient information $S^i_t$ into his CIB belief $\Pi^{\psi^\sigma, i}_t$ given that agents $-i$ play according to $\sigma^{-i}$, (ii) agent $i$'s private belief formed according to $i$ is identical to the probability distribution over $(X_t, S^{-i}_t)$ conditional on his complete history $H^i_t$ even when he plays an arbitrary strategy $\hat g^i$ different from $\sigma^i$. An example of such an alternative update rule is described by \eqref{eq:cib_bayesrule} (Bayes' rule) when $\sum_{\hat{x}_{t+1},\hat{s}_{t+1}} F_t^i(\hat{x}_{t+1},\hat{s}_{t+1},z_{t+1})(\pi_t^{\psi^\sigma};\sigma^{-i}_t)>0$ and a arbitrary PMF $\pi^{\psi^\sigma, i}_{t+1}(\cdot, \cdot)$ on $X_{t+1} \times S_{t+1}$ when $\sum_{\hat{x}_{t+1},\hat{s}_{t+1}} F_t^i(\hat{x}_{t+1},\hat{s}_{t+1},z_{t+1})(\pi_t^{\psi^\sigma};\sigma^{-i}_t)=0$.
\end{remark}

Definition \ref{def:CIB_update_n} ensures that agent $i$ can form his beliefs over $(X_t,S_t^{-i})$ by incorporating his sufficient private information $S_t^i$ into his CIB belief $\Pi_t^{\psi^\sigma,i}$ given that agents $-i$ play according to $\sigma^{-i}$. 
Moreover, this belief is sufficient to compute the probability distribution over $(X_t,S_t^{-i})$ conditional on his complete history $H_t^i$ even when he plays an arbitrary strategy $\hat{g}^i$ different from $\sigma^i$. 
We formalize the above discussion in Lemma \ref{lemma:CIBbelief-privatebelief} below, by using the notation 
$\prob^{\hat g^i, \sigma^{-i}, \psi^\sigma}(\cdot)$ to indicate the belief resulting when agent $i$ plays $\hat g^i$ and agents $-i$ play $g^{-i,\sigma}(h^{-i}_t)=\sigma^{-i}_t(\zeta^{-i}_t(h^{-i}_t), \pi_t^{\psi^\sigma})$ using the update rule $\psi^\sigma$
.

\begin{lemma}\label{lemma:CIBbelief-privatebelief}
Consider a SIB strategy profile $\sigma$, along with an associated consistent CIB belief system $\Pi_t^{\psi^\sigma}$. Suppose $(x_t, h^i_t, h^{-i}_t$ is a realization with positive probability under $(\hat g^i, \sigma^{-i})$, where $\hat g^i$ denotes an arbitrary strategy for agent $i$. 
Let $s^i_t = \zeta^i_t(h^i_t)$ and $s^{-i}_t = \zeta^{-i}_t(h^{-i}_t)$ be the associated sufficient private information.
Then agent $i$'s belief at time $t$ can be computed using $\pi_t^{\psi^\sigma}$ as
\begin{align}
    \prob^{\hat g^i, \sigma^{-i}, \psi^\sigma}
    (x_t, s^{-i}_t \mid h^i_t)
    = \frac{\pi^{\psi^\sigma,i}_t(x_t, s_t)}
        {\sum_{s^{-i}_t, x_t}\pi^{\psi^\sigma,i}_t(x_t, s^i_t, s^{-i}_t)}
\label{eq:lemma_cibbelief}
\end{align}

\end{lemma}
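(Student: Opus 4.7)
The plan is to establish the formula in three stages: first remove the dependence on agent $i$'s arbitrary strategy $\hat g^i$, then identify the CIB belief with the conditional distribution computed under the uniform strategy, and finally conclude by a straightforward conditioning argument.

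For the first stage, I would invoke Lemma \ref{prop:policy-independent}. Since the hypothesis gives $\prob^{\hat g^i, \sigma^{-i}}(h^i_t) > 0$, the same lemma yields $\prob^{\tilde g^i, \sigma^{-i}}(h^i_t) > 0$ and
\[
\prob^{\hat g^i, \sigma^{-i}, \psi^\sigma}(x_t, s^{-i}_t \mid h^i_t) = \prob^{\tilde g^i, \sigma^{-i}}(x_t, s^{-i}_t \mid h^i_t).
\]
Combining this with condition (iii) of Definition \ref{def:sufficient}, applied to the strategy profile $(\tilde g^i, \sigma^{-i})$, reduces the problem to showing
\[
\prob^{\tilde g^i, \sigma^{-i}}(x_t, s^{-i}_t \mid s^i_t, c_t) = \frac{\pi^{\psi^\sigma, i}_t(x_t, s_t)}{\sum_{\hat x_t, \hat s^{-i}_t} \pi^{\psi^\sigma, i}_t(\hat x_t, s^i_t, \hat s^{-i}_t)}.
\]

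The second (and main) stage is to prove by induction on $t$ the identity
\[
\pi^{\psi^\sigma, i}_t(x_t, s_t) = \prob^{\tilde g^i, \sigma^{-i}}(x_t, s_t \mid c_t)
\]
for every realization $c_t$ that is feasible under $(\tilde g^i, \sigma^{-i})$. The base case $t=1$ follows directly from the definition of $F^i_0$: summing over $y_1$ produces exactly the joint distribution of $(X_1, S_1, Z_1)$ under $(\tilde g^i, \sigma^{-i})$, after which Bayes conditioning on $z_1$ gives the claim. For the inductive step, note that because $c_{t+1} = (c_t, z_{t+1})$ is feasible under $(\tilde g^i, \sigma^{-i})$, we land in case (i) of Definition \ref{def:CIB_update_n} (Bayes' rule). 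A term-by-term inspection of $F^i_t$ shows that it is precisely
\[
\sum_{y_{t+1}, x_t, s_t, a_t} \prob\{z_{t+1},y_{t+1},x_{t+1} \mid x_t, a_t\}\left(\textstyle\prod_j \mathbbm{1}\{s^j_{t+1} = \phi^j_{t+1}(\cdot)\}\right)\tilde g^i_t(a^i_t)\left(\textstyle\prod_{j\ne i}\sigma^j_t(a^j_t)\right)\pi^{\psi^\sigma, i}_t(x_t, s_t),
\]
which, by the inductive hypothesis and the sequential update rules $\phi^j_t$ for sufficient private information, is exactly $\prob^{\tilde g^i, \sigma^{-i}}(x_{t+1}, s_{t+1}, z_{t+1} \mid c_t)$. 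Applying Bayes' rule in $z_{t+1}$ then yields $\pi^{\psi^\sigma, i}_{t+1}(x_{t+1}, s_{t+1}) = \prob^{\tilde g^i, \sigma^{-i}}(x_{t+1}, s_{t+1} \mid c_{t+1})$, completing the induction.

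The third stage is immediate: since the identity of the second stage gives the joint CIB belief as $\prob^{\tilde g^i, \sigma^{-i}}(x_t, s_t \mid c_t)$, dividing by the marginal in $s^i_t$ produces $\prob^{\tilde g^i, \sigma^{-i}}(x_t, s^{-i}_t \mid s^i_t, c_t)$, matching the claim.

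The main obstacle is the inductive step: one must verify carefully that the recursion defining $F^i_t$ in Definition \ref{def:CIB_update_n} matches, factor for factor, the joint law of $(X_{t+1}, S_{t+1}, Z_{t+1})$ under $(\tilde g^i, \sigma^{-i})$ conditioned on $c_t$. The bookkeeping is routine — it uses only the dynamics \eqref{eq:systemdynamic1}--\eqref{eq:systemdynamic3}, the update rules $\phi^j_t$, and the independence of the primitive noises — but it is where all the structural assumptions on $\pi^{\psi^\sigma, i}_t$ and on the SIB strategies enter the proof.
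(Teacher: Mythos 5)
Your proposal is correct and follows essentially the same route as the paper: use Lemma \ref{prop:policy-independent} (equivalently, condition (iii) of Definition \ref{def:sufficient}) to replace $\hat g^i$ by the uniform strategy and condition on $(s^i_t, c_t)$, then apply Bayes' rule and identify $\prob^{\tilde g^i,\sigma^{-i},\psi^\sigma}(x_t,s_t\mid c_t)$ with $\pi^{\psi^\sigma,i}_t(x_t,s_t)$. The only difference is that your second stage spells out, by induction on $t$ via the recursion $F^i_t$ and case (i) of Definition \ref{def:CIB_update_n}, the identification that the paper's proof asserts directly from the consistency of the CIB belief system — a worthwhile piece of bookkeeping, but not a different argument.
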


\begin{proof}
From Lemma \ref{prop:policy-independent}
we have
\begin{align}
    \prob^{\hat g^i, \sigma^{-i}, \psi^\sigma}(x_t, s^{-i}_t \mid h^i_t)
    = \prob^{\tilde g^i, \sigma^{-i}, \psi^\sigma}(x_t, s^{-i}_t \mid h^i_t) .
    = \prob^{\tilde g^i, \sigma^{-i}, \psi^\sigma}(x_t, s^{-i}_t \mid c_t, s^i_t).
    \label{eq:lemma_cibbelief_pf1}
\end{align}
By Bayes' rule we obtain
\begin{align}
 \prob^{\tilde{g}^i, \sigma^{-i}, \psi^\sigma}(x_t, s^{-i}_t \mid c_t, s^i_t) =
 \frac{\prob^{\tilde{g}^i, \sigma^{-i}, \psi^\sigma}(x_t, s_t \mid c_t)}{\prob^{\tilde{g}^i, \sigma^{-i}, \psi^\sigma}(s^{i}_t \mid c_t)} =  \frac{\pi^{\psi^\sigma,i}_t(x_t, s_t)}
        {\sum_{s^{-i}_t, x_t}\pi^{\psi^\sigma,i}_t(x_t, s^i_t, s^{-i}_t)}.
        \label{eq:lemma_cibbelief_pf2}
\end{align}
Combination of \eqref{eq:lemma_cibbelief_pf1} and \eqref{eq:lemma_cibbelief_pf2} establishes the assertion of Lemma \ref{lemma:CIBbelief-privatebelief}.
\end{proof}

\begin{remark}
\label{remark:condindepend}
Suppose $X_t = (X^1_t, X^2_t,.\ldots, X^N_t)$ and we have the conditional independence property, namely, that for any strategy profile $g$ $\prob^g(x_t, s_t \mid c_t) = \prod_i \prob^{g^i}(x^i_t, s^i_t \mid c_t)$. Then one can show for any $i$ that
\begin{align*}
 \pi^{\psi^\sigma,i}_t(x_t, s_t) = \prod_j \pi^{\psi^\sigma,i}(x^j_t, s^j_t) =  
\prob^{\tilde g^{i}_t}(x^i_t, s^{i}_t \mid c_t)
\prod_{j \neq i} \prob^{\sigma^{j}}(x^j_t, s^{j}_t \mid c_t)
\end{align*}
Therefore, for settings with the conditional independence property as in \cite{tang2022dynamic, ouyang2016TAC}, one can use the simplified beliefs $\prob^{\tilde g^{i}_t}(x^i_t, s^{i}_t \mid c_t)$ and $\prob^{\sigma^{j}}(x^j_t, s^{j}_t \mid c_t)$ as the compressed common information to compute the CIB belief $\pi^{\psi^\sigma, i}_t(x_t,s_t)$.
The conditional independence among the system components in the models of \cite{tang2022dynamic, ouyang2016TAC} could be lost when the agents' actions are not observable. 
\end{remark}

\section{
Sequential decomposition (Step 3)
}
\label{sec:sequential_decomposition}

In this section we present a sequential decomposition of the game, that is, a backward inductive sequential procedure that determines a Sufficient Information Based Bayesian Nash Equilibrium (SIB-BNE), defined below, if each step of this procedure has a solution.
We proceed as follows. We first establish a key closedness of best response property (Section \ref{subsec:closedness}); we use this property to provide a sequential decomposition of the game (Section \ref{subsec:dynamicprogram})

\begin{definition}[SIB-BNE]
\label{def:SIBBNE_original}

Consider a SIB strategy profile $\sigma^* = (\sigma^{*1}, \sigma^{*2}, \ldots, \sigma^{*n})$ and its corresponding consistent update rule $\psi^{\sigma^*}$.
The SIB strategy profile $\sigma^*$ is a SIB-BNE if it is a BNE of the dynamic game. That is, for all $i \in \mathcal N$,
\begin{align}
    \mathbb{E}^{\hat{g}^{i},\sigma^{*-i}, \psi^{\sigma^*}}\{U^i(X_{1:T},A_{1:T})\}
    \leq
    \mathbb{E}^{\sigma^*, \psi^{\sigma^*}}\{U^i(X_{1:T},A_{1:T})\},
    \notag\\
    \text{ for all strategies (not necessarily SIB strategies) }\hat{g}^i.
\end{align}

\end{definition}

\subsection{Closedness of best response}
\label{subsec:closedness}

The key result of this subsection is presented in the following theorem.
\begin{theorem}
\label{thm:closedness}
Consider a fixed and known SIB strategy profile $\sigma$ and the corresponding update rule $\psi^\sigma$. Suppose agents $-i$ use $\sigma^{-i}$ with $\psi^\sigma$. Then, there exists a SIB strategy $\hat\sigma^{i}$ that uses $\psi^\sigma$ and is a best response to $\sigma^{-i}$ with $\psi^\sigma$.
\end{theorem}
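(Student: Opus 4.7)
The plan is to recast agent $i$'s best-response problem as a single-agent Partially Observed Markov Decision Process (POMDP) with information state $(S^i_t, \Pi^{\psi^\sigma}_t)$, and then invoke the standard dynamic-programming argument to obtain an optimal policy that depends on the history only through this information state. Any such optimal policy is, by Definition~\ref{def:SIB_strategy_n}, a SIB strategy. A crucial preliminary observation is that the update rule $\psi^\sigma$ depends on the common observation $Z_{t+1}$ alone (and on the pre-specified $\sigma$), so agent $i$'s choice of strategy has no effect on the evolution of $\Pi^{\psi^\sigma}_t$. Thus we can treat $\Pi^{\psi^\sigma}_t$ as an exogenously driven part of agent $i$'s state.

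The first technical step is to verify \emph{controlled Markovianity} of $(S^i_t, \Pi^{\psi^\sigma}_t)$ under arbitrary strategies $\hat g^i$ for agent $i$. Since $S^i_{t+1}=\phi^i_{t+1}(S^i_t, Y^i_{t+1}, Z_{t+1}, A^i_t)$ and $\Pi^{\psi^\sigma}_{t+1}=\psi^\sigma_{t+1}(\Pi^{\psi^\sigma}_t, Z_{t+1})$, it suffices to show that
\begin{align*}
\prob^{\hat g^i, \sigma^{-i}, \psi^\sigma}\bigl(y^i_{t+1}, z_{t+1} \,\bigm|\, h^i_t, a^i_t\bigr)
\end{align*}
depends on the history only through $(s^i_t, \pi^{\psi^\sigma}_t, a^i_t)$. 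Marginalizing over $(x_t, s^{-i}_t, a^{-i}_t, x_{t+1})$, this conditional probability factors into (a) the system kernel $\prob(y^i_{t+1}, z_{t+1}, x_{t+1} \mid x_t, a_t)$, (b) the SIB choice probability $\prod_{j \neq i}\sigma^j_t(a^j_t)(\pi^{\psi^\sigma}_t, s^j_t)$, and (c) agent $i$'s belief $\prob^{\hat g^i, \sigma^{-i}, \psi^\sigma}(x_t, s^{-i}_t \mid h^i_t, a^i_t)$. Since $A^i_t$ is determined by $h^i_t$, the last factor reduces to $\prob^{\hat g^i, \sigma^{-i}, \psi^\sigma}(x_t, s^{-i}_t \mid h^i_t)$, and by Lemma~\ref{prop:policy-independent} together with Lemma~\ref{lemma:CIBbelief-privatebelief} this is a function of $(s^i_t, \pi^{\psi^\sigma, i}_t)$ alone, hence of $(s^i_t, \pi^{\psi^\sigma}_t)$.

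The second step is to show the expected instantaneous reward
\begin{align*}
\ee^{\hat g^i, \sigma^{-i}, \psi^\sigma}\bigl[u^i_t(X_t, A_t) \,\bigm|\, h^i_t, a^i_t\bigr]
= \sum_{x_t, s^{-i}_t, a^{-i}_t} u^i_t(x_t, a^i_t, a^{-i}_t)\,\prod_{j \neq i}\sigma^j_t(a^j_t)(\pi^{\psi^\sigma}_t, s^j_t)\,\prob^{\hat g^i, \sigma^{-i}, \psi^\sigma}(x_t, s^{-i}_t \mid h^i_t)
\end{align*}
also depends on $h^i_t$ only through $(s^i_t, \pi^{\psi^\sigma}_t)$; this is immediate from the same two lemmas. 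We have thus identified a well-defined POMDP for agent $i$ with finite state space $\mathcal{S}^i_t \times [\Delta(\mathcal{X}_t \times \mathcal{S}_t)]^N$, finite action space $\mathcal{A}^i_t$, and a finite horizon.

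Finally, we apply standard backward dynamic programming \cite[Chapter 6]{kumar1986stochastic}: setting $V^i_{T+1} \equiv 0$ and recursing $V^i_t(s^i_t, \pi^{\psi^\sigma}_t) = \max_{a^i_t} \bigl\{\bar u^i_t(s^i_t, \pi^{\psi^\sigma}_t, a^i_t) + \ee[V^i_{t+1}(S^i_{t+1}, \Pi^{\psi^\sigma}_{t+1}) \mid s^i_t, \pi^{\psi^\sigma}_t, a^i_t]\bigr\}$ yields a deterministic maximizer $\hat\sigma^i_t(s^i_t, \pi^{\psi^\sigma}_t) \in \argmax$, which by Definition~\ref{def:SIB_strategy_n} is a SIB strategy and, by the dynamic-programming principle, is a best response to $\sigma^{-i}$. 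The main obstacle in executing this plan is the careful accounting of policy-independence: agent $i$'s belief over $(X_t, S^{-i}_t)$, the evolution of $\Pi^{\psi^\sigma}_t$ under Definition~\ref{def:CIB_update_n}, and the behavior of agents $-i$ must all be shown to be unaffected by arbitrary deviations $\hat g^i$, so that $(S^i_t, \Pi^{\psi^\sigma}_t)$ qualifies as a bona fide controlled Markov state for agent $i$'s problem regardless of the strategy he plays.
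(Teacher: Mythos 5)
Your proposal is correct and follows essentially the same route as the paper: reduce agent $i$'s best-response problem to a (PO)MDP, use Lemma \ref{prop:policy-independent} and Lemma \ref{lemma:CIBbelief-privatebelief} to show that $(S^i_t,\Pi^{\psi^\sigma}_t)$ is a policy-independent information state, and conclude by backward dynamic programming. The only difference is organizational: the paper first sets up the POMDP with the larger state $(S_t,\Pi^{\psi^\sigma}_t,X_t)$ and observations $(Y^i_t,Z_t)$ (Lemmas \ref{thm:POMDP_new_n}--\ref{lemma:utility_new_n}) and then collapses the conditional belief to a function of $(s^i_t,\pi^{\psi^\sigma}_t)$, whereas you verify the reduced information state directly; the underlying marginalization computations are the same.
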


The proof is based on Lemmas \ref{thm:POMDP_new_n}, \ref{lemma:markov_new_n}, and \ref{lemma:utility_new_n} that we state and prove below.


\begin{lemma}
\label{thm:POMDP_new_n}

Consider a SIB strategy profile $\sigma$ and the corresponding update rule $\psi^\sigma$ along with the consistent CIB belief system $\Pi_{1:T}^{\psi^\sigma}$.

If agents $-i$ play according to the SIB strategies $\sigma^{-i}$ and use the update rule $\psi^\sigma$, the best response problem for agent $i$ is a POMDP with state and observation processes
\begin{align}
    & \tilde X_t = (S_t, \Pi_t^{\psi^\sigma}, X_t), t \in \mathcal T\\
    & \tilde Y_t = (Y^i_t, Z_t), t \in \mathcal T
\end{align}
respectively, and instantaneous utility
\begin{align}
\tilde u^i_t(\tilde X_t, A^i_t)
= 
\sum_{a^{-i}_t} 
    \big(\prod_{j \neq i}
    \sigma^{j}_{t}(a^j_t \mid S^j_t, \Pi_t^{\psi^\sigma})
    \big)
    u^i_t(X_t, a^{-i}_t, A^i_t), t \in \mathcal T
\end{align}
\end{lemma}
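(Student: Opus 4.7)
My plan is to verify the three defining ingredients of a POMDP for agent $i$'s best-response problem, under the assumption that agents $-i$ play the SIB strategy $\sigma^{-i}$ with update rule $\psi^\sigma$: (a) a controlled Markov transition for the augmented state $\tilde X_t = (S_t, \Pi_t^{\psi^\sigma}, X_t)$ with control $A^i_t$; (b) an observation kernel that produces $\tilde Y_t = (Y^i_t, Z_t)$ from $\tilde X_t$ together with the previous control and fresh noise; and (c) an instantaneous reward whose conditional expectation given $(\tilde X_t, A^i_t)$ equals $\tilde u^i_t(\tilde X_t, A^i_t)$. Once these three properties are in place, the equivalence with agent $i$'s original problem follows from the fact that his information $H^i_t = (Z_{1:t}, Y^i_{1:t}, A^i_{1:t-1})$ coincides with the POMDP observation-action history $(\tilde Y_{1:t}, A^i_{1:t-1})$, so every strategy in one problem has a counterpart in the other with identical expected total utility.

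The main technical step, and the principal obstacle, is part (a). I would establish it by composing the one-step recursions for the components of $\tilde X_t$. From \eqref{eq:systemdynamic1}--\eqref{eq:systemdynamic3}, $X_{t+1}=f_t(X_t,A_t,W_t^x)$, $Y^j_{t+1}=O^j_{t+1}(X_{t+1},A_t,W^j_{t+1})$ for all $j$, and $Z_{t+1}=O^c_{t+1}(X_{t+1},A_t,W^c_{t+1})$. Definition \ref{def:sufficient}(ii) gives $S^j_{t+1}=\phi^j_{t+1}(S^j_t,Y^j_{t+1},Z_{t+1},A^j_t)$ for every $j$, and Definition \ref{def:CIB_update_n} gives $\Pi^{\psi^\sigma}_{t+1}=\psi^\sigma_{t+1}(\Pi^{\psi^\sigma}_t,Z_{t+1})$. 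Under $\sigma^{-i}$, the opponents' actions $\{A^j_t\}_{j\neq i}$ are conditionally independent given $\tilde X_t$, each drawn from $\sigma^j_t(\cdot\mid S^j_t,\Pi^{\psi^\sigma}_t)$; since $(S^j_t,\Pi^{\psi^\sigma}_t)$ is a component of $\tilde X_t$, this randomization depends on the past only through $\tilde X_t$ and is independent of the dynamical noises. Composing these relations yields a one-step kernel that generates $(\tilde X_{t+1},Y^i_{t+1},Z_{t+1})$ from $(\tilde X_t,A^i_t)$ together with a vector of mutually independent exogenous variables, which is exactly the controlled Markov form required by a POMDP.

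Parts (b) and (c) then follow easily. For (b), $Y^i_t$ and $Z_t$ are by construction functions of $X_t$, the previous action $A_{t-1}$, and the fresh noises $W^i_t,W^c_t$; the $A^{-i}_{t-1}$ piece is absorbed into the joint kernel built in (a), so $\tilde Y_t$ is a function of $\tilde X_t$, $A^i_{t-1}$, and independent noise. For (c), the true stage utility $u^i_t(X_t,A^i_t,A^{-i}_t)$ becomes $\tilde u^i_t(\tilde X_t,A^i_t)$ upon averaging over $A^{-i}_t$, whose conditional distribution given $\tilde X_t$ is $\prod_{j\neq i}\sigma^j_t(\cdot\mid S^j_t,\Pi^{\psi^\sigma}_t)$. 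Together with (a) and the information-set equivalence noted above, this yields the claimed POMDP reformulation of agent $i$'s best-response problem.
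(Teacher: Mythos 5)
Your proposal is correct and follows essentially the same route as the paper: your parts (a) and (b) amount to the paper's Lemma \ref{lemma:markov_new_n} (the joint kernel for $(\tilde X_{t+1},\tilde Y_{t+1})$ given $(\tilde X_t, A^i_t)$, obtained by composing the dynamics, the updates $\phi^j_{t+1}$, $\psi^\sigma_{t+1}$, and the SIB strategies of agents $-i$), and your part (c) is the paper's Lemma \ref{lemma:utility_new_n} (averaging $u^i_t$ over $A^{-i}_t$ conditionally on $\tilde X_t$). No substantive differences.
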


The assertion of Lemma \ref{thm:POMDP_new_n} is a direct consequence of Lemmas \ref{lemma:markov_new_n} and \ref{lemma:utility_new_n}.

\begin{lemma}
\label{lemma:markov_new_n}
Consider a SIB strategy profile $\sigma$ and the corresponding update rule $\psi^\sigma$. Suppose agents $-i$ play according to the SIB strategies $\sigma^{-i}$ using $\psi^\sigma$ and agent $i$ follows an arbitrary strategy $\hat{g}^i$ (not necessarily a SIB strategy). Then
\begin{align}
    \prob^{\hat{g}^i,\sigma^{-i},\psi^\sigma}(\tilde x_{t+1}, \tilde y_{t+1} \mid 
    \tilde x_{1:t}, \tilde y_{1:t}, a^i_{1:t})
    = \prob^{\hat{g}^i\sigma^{-i},\psi^\sigma}(\tilde x_{t+1}, \tilde y_{t+1} \mid 
    \tilde x_{t}, a^i_{t})
\end{align}
\end{lemma}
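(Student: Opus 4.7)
The plan is to prove the Markov property by expressing the one-step transition of $(\tilde X_t, \tilde Y_t)$ explicitly in terms of the primitive dynamics, and then showing that after marginalizing out the variables not contained in $(\tilde X_{t+1}, \tilde Y_{t+1})$, namely $A^{-i}_t$, $Y^{-i}_{t+1}$, and the noises $W^x_t, W^i_{t+1}, W^c_{t+1}, W^{-i}_{t+1}$, the resulting conditional distribution depends on the past only through $(\tilde X_t, A^i_t)$.

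First I would unpack $\tilde X_{t+1} = (S_{t+1}, \Pi_{t+1}^{\psi^\sigma}, X_{t+1})$ and $\tilde Y_{t+1} = (Y^i_{t+1}, Z_{t+1})$ using the primitive equations: $X_{t+1} = f_t(X_t, A_t, W^x_t)$, $Y^j_{t+1} = O^j_{t+1}(X_{t+1}, A_t, W^j_{t+1})$, $Z_{t+1} = O^c_{t+1}(X_{t+1}, A_t, W^c_{t+1})$, $S^j_{t+1} = \phi^j_{t+1}(S^j_t, Y^j_{t+1}, Z_{t+1}, A^j_t)$, and $\Pi^{\psi^\sigma}_{t+1} = \psi^\sigma_{t+1}(\Pi^{\psi^\sigma}_t, Z_{t+1})$. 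Each of these functions takes as inputs only components of $\tilde X_t$, noise at time $t$ or $t+1$, and the full action profile $A_t = (A^i_t, A^{-i}_t)$. Since the noise variables are jointly independent of everything up to time $t$, conditional on $(\tilde X_t, A_t)$ the joint distribution of $(\tilde X_{t+1}, \tilde Y_{t+1})$ is a fixed kernel, independent of any earlier history $(\tilde x_{1:t-1}, \tilde y_{1:t-1}, a^i_{1:t-1})$.

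Next I would marginalize $A^{-i}_t$ out of this kernel. Under the strategy profile $(\hat g^i, \sigma^{-i})$, each agent $j \neq i$ draws $A^j_t$ from $\sigma^j_t(\cdot \mid S^j_t, \Pi^{\psi^\sigma}_t)$, independently of one another and of agent $i$'s action and history conditional on $(S^{-i}_t, \Pi^{\psi^\sigma}_t)$, both of which are already encoded in $\tilde X_t$. Combining this with the previous step, integrating out $A^{-i}_t$ (and then the noises $W^{-i}_{t+1}$ and $Y^{-i}_{t+1}$) yields a one-step transition that depends on the past only through $(\tilde X_t, A^i_t)$, which is the desired identity.

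The step I expect to require the most care is the conditional-independence claim for $A^{-i}_t$: one must verify that given $(\tilde x_{1:t}, \tilde y_{1:t}, a^i_{1:t})$, the past play by agents $-i$, which generated $A^{-i}_{1:t-1}$ and $Y^{-i}_{1:t}$, does not introduce additional dependence on $A^{-i}_t$ beyond what $(S^{-i}_t, \Pi^{\psi^\sigma}_t)$ already captures. This is handled by a short induction on $t$, using at each stage that (a) $\sigma^{-i}_t$ is a SIB strategy, (b) $S^j_t$ is updated from variables that are either contained in $\tilde X_{t-1}$ or observable at $t$, and (c) the noise variables are mutually independent, so summing out prior $A^{-i}_{1:t-1}$ and $Y^{-i}_{1:t}$ leaves only the conditional $\sigma^{-i}_t(\cdot \mid S^{-i}_t, \Pi^{\psi^\sigma}_t)$ in the expression.
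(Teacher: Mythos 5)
Your proof is correct and follows essentially the same route as the paper: expand the one-step transition using the primitive kernel $\mathbb{P}\{z_{t+1},y_{t+1},x_{t+1}\mid x_t,a_t\}$, the deterministic updates of $S_{t+1}$ and $\Pi^{\psi^\sigma}_{t+1}$, and the SIB form of $\sigma^{-i}_t$, then marginalize over $a^{-i}_t$ and $y^{-i}_{t+1}$ to see that the result depends only on $(\tilde x_t, a^i_t)$. The only remark is that the induction you anticipate for the conditional independence of $A^{-i}_t$ is unnecessary: since $S^{-i}_t$ and $\Pi^{\psi^\sigma}_t$ are components of $\tilde X_t$ and hence lie in the conditioning set, and the behavioral-strategy randomizations at time $t$ are independent of everything generating that set, one gets $\prob(a^{-i}_t \mid \tilde x_{1:t}, \tilde y_{1:t}, a^i_{1:t}) = \prod_{j\neq i}\sigma^j_t(a^j_t \mid s^j_t, \pi^{\psi^\sigma}_t)$ in a single step.
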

\begin{proof}
The probability for the next state and observation $\tilde x_{t+1}, \tilde y_{t+1}$ can be computed by
\begin{align}
    & \prob^{\hat{g}^i,\sigma^{-i},\psi^\sigma}(\tilde x_{t+1}, \tilde y_{t+1} \mid 
    \tilde x_{1:t}, \tilde y_{1:t}, a^i_{1:t})
    \notag\\
    = & \prob^{\hat{g}^i,\sigma^{-i},\psi^\sigma}(x_{t+1}, \pi^{\psi^\sigma}_{t+1}, s_{t+1}, y^i_{t+1}, z_{t+1}\mid
    x_{1:t}, \pi^{\psi^\sigma}_{1:t}, s_{1:t}, y^i_{1:t}, z_{1:t}, a^i_{1:t})
    \notag\\
    = & \sum_{y^{-i}_{t+1}, a^{-i}_t} \prob^{\hat{g}^i,\sigma^{-i},\psi^\sigma}(
    x_{t+1}, \pi^{\psi^\sigma}_{t+1}, s_{t+1}, y_{t+1}, z_{t+1}, a^{-i}_t\mid 
    x_{1:t}, \pi^{\psi^\sigma}_{1:t}, s_{1:t}, y^i_{1:t}, z_{1:t}, a^i_{1:t})
    \notag\\
    = & \sum_{y^{-i}_{t+1}, a^{-i}_t} 
    \big(\prod_{j}\mathds{1}(s^j_{t+1} = \phi^j_{t+1} (s^j_t, y^j_{t+1}, z_{t+1}, a^j_t))
    \big)
    \mathbb{P}\{z_{t+1},y_{t+1}, x_{t+1}\mid x_{t},a_{t}\}
    \notag\\
    & \hspace{2cm}
    \mathds{1}(\pi^{\psi^\sigma}_{t+1} = \psi^\sigma_{t+1}(\pi^{\psi^\sigma}_t,z_{t+1}))
    \big(\prod_{j \neq i}
    \sigma^{j}_{t}(a^j_t \mid s^j_t, \pi^{\psi^\sigma}_t)
    \big)
    \label{eq:markov_update_1_n}
\end{align}
where the last equality follows from the system dynamics, part (ii) of Definition \ref{def:sufficient}, Definition \ref{def:CIB_update_n}, and the form of SIB strategies of agents ${-i}$.
Since the right hand side of \eqref{eq:markov_update_1_n} depends only on $(\tilde x_{t}, a^i_{t})$
we conclude that 
\begin{align}
    \prob^{\hat{g}^i,\sigma^{-i},\psi^\sigma}(\tilde x_{t+1}, \tilde y_{t+1} \mid 
    \tilde x_{1:t}, \tilde y_{1:t}, a^i_{1:t})
    = \prob^{\hat{g}^i,\sigma^{-i},\psi^\sigma}(\tilde x_{t+1}, \tilde y_{t+1} \mid 
    \tilde x_{t}, a^i_{t})
\end{align}
\end{proof}

Lemma \ref{lemma:markov_new_n} shows that $\{\tilde{X}_t, \tilde{Y}_t, t \in \mathcal T\}$ is a Markov process conditional on $\{A^i_t, t \in \mathcal T\}$

\begin{lemma}
\label{lemma:utility_new_n}
Consider a SIB strategy profile $\sigma$ and the corresponding update rule $\psi^\sigma$. Suppose agents $-i$ follow the SIB strategies $\sigma^{-i}$ using $\psi^\sigma$ and agent $i$ follows an arbitrary strategy $\hat{g}^i$ (not necessarily a SIB strategy).
Then there are utility functions $\tilde u^i_t$ such that
$\ee^{\hat{g}^i,\sigma^{-i},\psi^\sigma}[\tilde u^i_t(\tilde X_t, A^i_t)] = \ee^{\hat{g}^i,\sigma^{-i},\psi^\sigma}[u^i_t(X_t, A_t)]$ for all $t\in\mathcal T$.
\end{lemma}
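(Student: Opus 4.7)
The plan is to take $\tilde u^i_t$ exactly as stated in Lemma \ref{thm:POMDP_new_n}, namely
\[
\tilde u^i_t(\tilde X_t, A^i_t) := \sum_{a^{-i}_t} \Bigl(\prod_{j \neq i} \sigma^{j}_{t}(a^j_t \mid S^j_t, \Pi_t^{\psi^\sigma})\Bigr)\, u^i_t(X_t, a^{-i}_t, A^i_t),
\]
which is a legitimate function of $(\tilde X_t, A^i_t)$ since $S^{-i}_t$ and $\Pi_t^{\psi^\sigma}$ are components of $\tilde X_t$. The asserted expectation identity will then follow from the tower property once one conditions on $(\tilde X_t, A^i_t)$: the whole proof reduces to identifying the conditional law $\prob^{\hat g^i, \sigma^{-i}, \psi^\sigma}(A^{-i}_t = a^{-i}_t \mid \tilde X_t, A^i_t)$ with the product $\prod_{j \neq i} \sigma^{j}_{t}(a^j_t \mid S^j_t, \Pi_t^{\psi^\sigma})$ that appears inside $\tilde u^i_t$.

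To establish this distributional identity I would exploit the behavioral-strategy structure. At time $t$ each agent draws his action using an independent private randomization device: agents $j \neq i$ use $\sigma^j_t(\cdot \mid S^j_t, \Pi_t^{\psi^\sigma})$ while agent $i$ uses $\hat g^i_t(\cdot \mid H^i_t)$. Because $(S^{-i}_t, \Pi_t^{\psi^\sigma})$ is $\tilde X_t$-measurable and the randomization devices are mutually independent, the joint conditional law of $A_t$ factorizes as
\[
\prob^{\hat g^i, \sigma^{-i}, \psi^\sigma}(A_t = a_t \mid \tilde X_t, H^i_t) = \hat g^i_t(a^i_t \mid H^i_t)\, \prod_{j \neq i} \sigma^{j}_{t}(a^j_t \mid S^j_t, \Pi_t^{\psi^\sigma}).
\]
Conditioning further on $A^i_t$ and canceling the $\hat g^i_t(a^i_t \mid H^i_t)$ factor gives
\[
\prob^{\hat g^i, \sigma^{-i}, \psi^\sigma}(A^{-i}_t = a^{-i}_t \mid \tilde X_t, A^i_t, H^i_t) = \prod_{j \neq i} \sigma^{j}_{t}(a^j_t \mid S^j_t, \Pi_t^{\psi^\sigma}),
\]
and since the right-hand side does not depend on $H^i_t$, integrating $H^i_t$ out preserves the equality. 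Substituting back into the tower expansion of $\ee^{\hat g^i, \sigma^{-i}, \psi^\sigma}[u^i_t(X_t, A_t)]$ and comparing with the definition of $\tilde u^i_t$ closes the argument.

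The main obstacle I anticipate is precisely the conditional independence step in the display above: in a model with generic asymmetric information and hidden actions, one must be careful to argue that conditioning on $A^i_t$ does not further distort the conditional law of $A^{-i}_t$ once $\tilde X_t$ is fixed. This is handled by making explicit the independent private randomization devices underlying the behavioral strategies; once that bookkeeping is in place, the rest of the proof is routine substitution.
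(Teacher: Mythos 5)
Your proposal is correct and follows essentially the same route as the paper's proof: define $\tilde u^i_t$ by the product formula and apply the tower property after conditioning on $(\tilde X_t, A^i_t)$, reducing everything to the identification $\prob^{\hat g^i,\sigma^{-i},\psi^\sigma}(a^{-i}_t \mid \tilde X_t, A^i_t) = \prod_{j\neq i}\sigma^j_t(a^j_t \mid S^j_t, \Pi^{\psi^\sigma}_t)$. The only difference is that you spell out, via the independent randomization devices, the factorization step that the paper asserts in its final equality; this added bookkeeping is sound and does not change the argument.
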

\begin{proof}
Recall that $\tilde X_t = (S_t, \Pi^{\psi^\sigma}_t, X_t)$. Then
\begin{align}
    & \ee^{\hat{g}^i,\sigma^{-i},\psi^\sigma}[u^i_t(X_t, A_t)] 
    \notag\\
    = & \ee^{\hat{g}^i,\sigma^{-i},\psi^\sigma}[u^i_t(X_t, A^{-i}_t, A^i_t)]
    \notag\\
    = & \ee^{\hat{g}^i,\sigma^{-i},\psi^\sigma}\big[
    \ee^{\hat{g}^i,\sigma^{-i},\psi^\sigma}[u^i_t(X_t, A^{-i}_t, A^i_t)\mid \tilde X_t, A^i_t]\big]
    \notag\\
    = & \ee^{\hat{g}^i,\sigma^{-i},\psi^\sigma}\big[
    \sum_{a^{-i}_t} \prob^{\hat{g}^i,\sigma^{-i},\psi^\sigma}(a^{-i}_t \mid S_t, \Pi^{\psi^\sigma}_t, X_t, A^i_t) u^i_t(X_t, a^{-i}_t, A^i_t)]
    \big]
    \notag\\
    = & \ee^{\hat{g}^i,\sigma^{-i},\psi^\sigma}\big[
    \sum_{a^{-i}_t} 
    \big(\prod_{j \neq i}
    \sigma^{j}_{t}(a^j_t \mid S^j_t, \Pi^{\psi^\sigma}_t)
    \big)
    u^i_t(X_t, a^{-i}_t, A^i_t)]
    \big]
\end{align}
Therefore, we establish the claim of the lemma by defining 
\begin{align}
\tilde u^i_t(\tilde X_t, A^i_t) = \sum_{a^{-i}_t} 
    \big(\prod_{j \neq i}
    \sigma^{j}_{t}(a^j_t \mid S^j_t, \Pi^{\psi^\sigma}_t)
    \big)
    u^i_t(X_t, a^{-i}_t, A^i_t)]
\end{align}
\end{proof}

\begin{proof}[Proof of Theorem \ref{thm:closedness}]
From Lemma \ref{thm:POMDP_new_n} we conclude that the best response of agent $i$ to $\sigma^{-i}$ is a POMDP with state $\tilde X_t$. From the theory of POMDP \cite[Chapter 6]{kumar1986stochastic} we know that: (i) the belief on the state $\tilde X_t = (S_t, \Pi^{\psi^\sigma}_t, X_t)$ conditioned on available information $h^i_t$ is an information state for the agent; (ii)
for each $t \in \mathcal T$ there exists an optimal strategy for agent $i$ that is a function of the information state at $t$.
We now prove that $(S^i_t, \Pi^{\psi^\sigma}_t)$ is an information state for agent $i$ at $t, t \in \mathcal T$.
\\
We note that $S^i_{t+1}=\phi^i_t(S^i_t, Y^i_{t+1}, Z_{t+1}, A^i_t)$ from part (ii) of Definition \ref{def:sufficient}, and $\Pi^{\psi^\sigma}_{t+1} = \psi^\sigma_{t+1}(\Pi^{\psi^\sigma}_t, Z_{t+1})$ from \eqref{eq:psi_t}.

Thus, we only need to show that 
for any strategy $\hat{g}^i$ and any realization $h^i_t$ such that $\prob^{\hat{g}^i, \sigma^{-i},\psi^\sigma}(h^i_t) > 0$ the following equality is true:
\begin{align}
    \prob^{\hat{g}^i, \sigma^{-i},\psi^\sigma} (s_t, \pi^{\psi^\sigma}_t, x_t \mid h^i_t)
    = 
    \prob^{\hat{g}^i, \sigma^{-i},\psi^\sigma} (s_t, \pi^{\psi^\sigma}_t, x_t \mid s^i_t, \pi^{\psi^\sigma}_t)
    \label{eq:thmclosedness_pf1}
\end{align}
For that matter, we note that $s^i_t, \pi^{\psi^\sigma}_t$ are perfectly known to agent $i$. Furthermore, from the definition of sufficient private information and Lemma \ref{lemma:CIBbelief-privatebelief} we have
\begin{align}
    \prob^{\hat{g}^i, \sigma^{-i},\psi^\sigma} (s^{-i}_t, x_t \mid h^i_t)
    = \frac{\pi^{{\psi^\sigma}, i}_t(s_t, x_t)}
        {
        \sum_{s^{-i}_t, x_t}\pi^{{\psi^\sigma}, i}_t(s^i_t, s^{-i}_t, x_t)},
        \label{eq:thmclosedness_pf2}
\end{align}
which is a function of $(s^i_t,\pi^{\psi^\sigma}_t)$. Therefore, 
\begin{align}
    \prob^{\hat{g}^i, \sigma^{-i},\psi^\sigma} (s_t, \pi^{\psi^\sigma}_t, x_t \mid h^i_t)
    = \mathds{1}(s^i_t = \zeta^i_t(h^i_t))\mathds{1}(\pi^{\psi^\sigma}_t = \gamma^{\psi^\sigma}(h^i_t))
    \prob^{\hat{g}^i, \sigma^{-i},\psi^\sigma} (s^{-i}_t, x_t \mid p^i_t, c_t)
    \label{eq:thmclosedness_pf3}
\end{align}
where $\gamma^{\psi^\sigma}(h^i_t) = \psi^\sigma_t(\psi^\sigma_{t-1}, \cdots)$ is the composition of $\psi^\sigma$ from $1$ to $t$.
Then, equation \eqref{eq:thmclosedness_pf1} is true because of \eqref{eq:thmclosedness_pf2} and \eqref{eq:thmclosedness_pf3}.
Consequently, $(S^i_t,\Pi^{\psi^\sigma}_t), t \in \mathcal T$ is an information state for the best response problem for agent $i$ and the assertion of Theorem \ref{thm:closedness} is true.
\end{proof}

As a result of Theorem \ref{thm:closedness}, a definition of SIB BNE equivalent to Definition \ref{def:SIBBNE_original} is the following
\begin{definition}[Equivalent definition of SIB BNE]
\label{def:sibbne}
Consider a SIB strategy profile $\sigma^* = (\sigma^{*1}, \sigma^{*2}, \ldots, \sigma^{*n})$ and its corresponding consistent update rule $\psi^{\sigma^*}$.
The SIB strategy profile $\sigma^*$ is a SIB BNE if for all $i \in \mathcal N$,
\begin{align}
    \mathbb{E}^{\sigma^{i},\sigma^{*-i}, \psi^{\sigma^*}}\{U^i(X_{1:T},A_{1:T})\}
    \leq
    \mathbb{E}^{\sigma^*, \psi^{\sigma^*}}\{U^i(X_{1:T},A_{1:T})\}
\end{align}
for all $\sigma^i \in \Lambda^i$ where $\Lambda^i$ is the set of SIB strategy profiles of agent $i$.
\end{definition}

A consequence of Lemmas \ref{thm:POMDP_new_n}-\ref{lemma:utility_new_n} and Theorem \ref{thm:closedness} is the following. Consider a SIB strategy profile $\sigma$, the corresponding update rule $\psi^\sigma$ along with the consistent CIB belief system $\Pi^{\psi^\sigma}_{1:T}$; if agents $-i$ play according to $\sigma^{-i}$, then the best response of agent $i$ could be determined by the dynamic program
\begin{align}
    \breve V_{T+1}^i(\cdot, \cdot) = 0 \text{ for all }i
\end{align}
\begin{align}
\breve V^i_t(\pi^{\psi^\sigma}_t, s^i_t) = 
& \max_{\tilde \sigma^i_t \in \Lambda^i_t}\ee^{\tilde \sigma^i_t, \sigma^{-i}_t, \psi^\sigma}\{u^i_t(X_t, A_t) + \breve V^i_{t+1}(\psi^{\sigma}_{t+1}(\pi^{\psi^\sigma}_t, Z_{t+1}), S^i_{t+1})\mid s^i_t\}, 
\notag\\
& \,
\forall \pi^{\psi^\sigma}_t \in \Delta(\mathcal X_t \times \mathcal S_t)^N,  \forall s^i_t \in \mathcal S^i_s, t \in \mathcal T
\end{align}
where $\Lambda^i_t$ is the set of SIB strategies of agent $i$ at time $t$.

\subsection{Sequential decomposition}
\label{subsec:dynamicprogram}

Given a set of value functions 
$V_{t+1} = \{V^i_{t+1}: \mathbf\Pi_{t+1} \times \mathcal S^i_{t+1} \rightarrow \mathbb{R}, i \in \mathcal N\}$, a SIB strategy profile $\sigma$, the corresponding update rule $\psi^{\sigma}_{t+1}$ defined by $\eqref{eq:psi_t}$, and the consistent CIB belief $\pi^{\psi^\sigma}_t$,
define the stage-game $G_t(V_{t+1},\pi^{\psi^\sigma}_t)$ as follows.

(i) There are $N$ agents.
(ii) The system state is $X_t$.
(iii) Each agent $i$ observes private information $S^i_t$ and common information $\pi^{\psi^\sigma}_t$.
(iv) Agent $i$'s belief about the state $X_t$ and other agents' private information $S^{-i}_t$ is given by $\pi^{\psi^\sigma,i}_t(x_t, s^{-i}_t)$, that is,
    \begin{align}
        \pi^{\psi^\sigma,i}_t(x_t, s^{-i}_t) \in \Delta(\mathcal X_t \times \mathcal S^{-i}_t).
    \end{align}
(v) Each agent $i$ selects action $A^i_t$ based on his available information; let $\hat \sigma^i_t$ denote agent $i$'s strategy for this stage-game; then,
    \begin{align}
        \prob^{\hat \sigma_t, \psi^\sigma}(A^i_t = a^i_t \mid s^i_t, \pi^{\psi^\sigma}_t) =  \hat\sigma^i_t(a^i_t\mid s^i_t, \pi^{\psi^\sigma}_t).
    \end{align}
(vi) Each agent $i$ has utility
    \begin{align}
        U^i_{G_t(V_{t+1},\pi^{\psi^\sigma}_t)} = u^i_t(X_t, A_t) + V^i_{t+1}(\psi^{\sigma}_{t+1}(\pi^{\psi^\sigma}_t, Z_{t+1}), S^i_{t+1})
        \label{eq:stagegame_utility}
    \end{align}
    where 
    $(Z_{t+1},S^i_{t+1})$ conditioned on $(X_t, S_t, A_t)$ follows the conditional probability $\sum_{x_{t+1}, s^{-i}_{t+1}}\prob(z_{t+1}, x_{t+1}, s_{t+1} \mid x_t, s_t, a_t)$ and the conditional probability $\prob(z_{t+1}, x_{t+1}, s_{t+1} \mid x_t, s_t, a_t)$ is given by
    \begin{align}
    & \prob(z_{t+1}, x_{t+1}, s_{t+1} \mid x_t, s_t, a_t)
    \notag\\
    =& \sum_{y_{t+1}} \mathbb{P}\{x_{t+1}\mid x_t,a_t\}
    \mathbb{P}\{z_{t+1},y_{t+1}\mid x_{t+1},a_t\}
    \notag\\
    & \hspace{1cm}\left(\prod_{j}\mathbbm{1}\{s_{t+1}^j = \phi_{t+1}^j(s_t^j,z_{t+1},y_{t+1}^j,a_t^j)\}\right)
    \label{eq:update_condprob}
    \end{align}

(vii) Given a strategy profile $\hat\sigma_t$ for the stage-game, the expected utility of each player $i$ is given by

    \begin{align}
        &\ee^{\hat\sigma_t, \psi^\sigma} [ U^i_{G_t(V_{t+1},\pi^{\psi^\sigma}_t)} \mid  s^i_t ]
        \notag\\
        = &\sum_{x_t, s_t^{-i}, a_t, z_{t+1}, x_{t+1}, s_{t+1}} 
       \hspace{-1cm} \pi^{\psi^\sigma,i}_t(x_t, s_t^{-i}) \prod_j \hat\sigma^j_t(a^i_t\mid s^i_t, \pi^{\psi^\sigma}_t ) \prob(z_{t+1}, x_{t+1}, s_{t+1} \mid  x_t, s_t, a_t) 
       \notag\\
       &\hspace{2cm} (u^i_t(x_t, a_t) + V^i_{t+1}(\psi^\sigma_{t+1}(\pi_t^{\psi^\sigma}, z_{t+1}), s^i_{t+1}))
        \label{eq:stagegame_exputil}
    \end{align}

Note that all the random variables of the stage-game $G_t(V_{t+1},\pi^{\psi^\sigma}_t)$ may not necessarily be the same as their counterparts in the original dynamic game since each agent $i$ is allowed to choose an arbitrary SIB strategy $\hat \sigma^i_t$ which may be different from $\sigma^i_t$ specified by the SIB strategy profile $\sigma$. The stage-game random variables will coincide with their counterparts in the original game if all agents follow $\sigma$.

\begin{theorem}[Sequential decomposition]
\label{thm:sequential_decomposition}

Consider a SIB strategy profile $\sigma = \{\sigma_t, t \in \mathcal T\}$  and the corresponding update rule $\psi^\sigma = \{\psi^\sigma_t, t \in \mathcal T\}$ defined by \eqref{eq:psi_it}-\eqref{eq:psi_t}.
Define 
\begin{align}
    V_{T+1}^i(\cdot, \cdot) = 0 \text{ for all }i
    \label{eq:dp_last}
\end{align}
\begin{align}
& V^i_t(\pi^{\psi^\sigma_t}, s^i_t) =  \ee^{\sigma_t, \psi^\sigma} [ U^i_{G_t(V_{t+1}, \pi^{\psi^\sigma}_t)}\mid  s^i_t ]
\label{eq:dp_valupdate}
\end{align}
where the right hand side of \eqref{eq:dp_valupdate} is given by \eqref{eq:stagegame_exputil}. If for all $t \in \mathcal T$, there is a SIB strategy profile $\hat\sigma_t$ such that
$\hat\sigma_t$ is a BNE of the stage-game $G_t(V_{t+1}, \pi^{\psi^\sigma}_t)$, that is,

\begin{align}
& \ee^{\hat \sigma^i_t, \hat\sigma^{-i}_t, \psi^\sigma}[U^i_{G_t(V_{t+1},\pi^{\psi^\sigma}_t)}\mid  s^i_t]
 = \max_{\tilde \sigma^i_t \in \Lambda^i_t}\ee^{\tilde \sigma^i_t, \hat\sigma^{-i}_t, \psi^\sigma}[U^i_{G_t(V_{t+1},\pi^{\psi^\sigma}_t)}\mid  s^i_t]
 \label{eq:dp_bne_max}
\end{align}
for all $i \in \mathcal N$ where $\Lambda^i_t$ is the set of SIB strategies of agent $i$ at time $t$, and

\begin{align}
\hat\sigma_t = \sigma_t,
\label{eq:dp_bne_eq}
\end{align}
then the SIB strategy profile $\sigma$ is a SIB-BNE of the original dynamic game.
\end{theorem}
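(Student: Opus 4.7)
The plan is to prove the theorem by backward induction on $t$, combined with the equivalent characterization of SIB-BNE given by Definition \ref{def:sibbne}. Theorem \ref{thm:closedness} guarantees that it suffices to check deviations within the class of SIB strategies, so I fix an agent $i$ and an arbitrary SIB deviation $\tilde\sigma^i = (\tilde\sigma^i_1,\ldots,\tilde\sigma^i_T)$ and aim to show, for every $t \in \mathcal T$ and every realization $(s^i_t, \pi^{\psi^\sigma}_t)$,
\begin{align}
\ee^{\tilde\sigma^i_{t:T}, \sigma^{-i}_{t:T}, \psi^\sigma}\!\left[\sum_{\tau=t}^{T} u^i_\tau(X_\tau, A_\tau) \,\middle|\, S^i_t = s^i_t,\, \Pi^{\psi^\sigma}_t = \pi^{\psi^\sigma}_t\right] \leq V^i_t(\pi^{\psi^\sigma}_t, s^i_t),
\end{align}
with equality when $\tilde\sigma^i_{t:T} = \sigma^i_{t:T}$. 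The base case $t = T+1$ is immediate from \eqref{eq:dp_last}.

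For the inductive step, I split the sum into the stage payoff $u^i_t(X_t, A_t)$ and the continuation payoff from $t+1$ onward, then condition on $(X_t, S_t, A_t, Z_{t+1}, Y^i_{t+1})$. By part (ii) of Definition \ref{def:sufficient}, $S^i_{t+1} = \phi^i_{t+1}(S^i_t, Y^i_{t+1}, Z_{t+1}, A^i_t)$, and by \eqref{eq:psi_t}, $\Pi^{\psi^\sigma}_{t+1} = \psi^\sigma_{t+1}(\Pi^{\psi^\sigma}_t, Z_{t+1})$, so both arguments of $V^i_{t+1}$ are measurable with respect to this conditioning. The conditional distribution of $(Z_{t+1}, X_{t+1}, S_{t+1})$ given $(X_t, S_t, A_t)$ is precisely the kernel \eqref{eq:update_condprob} used to define $G_t(V_{t+1}, \pi^{\psi^\sigma}_t)$. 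Applying the induction hypothesis to dominate the continuation payoff by $V^i_{t+1}(\Pi^{\psi^\sigma}_{t+1}, S^i_{t+1})$ and using Lemma \ref{lemma:CIBbelief-privatebelief} to identify the conditional distribution of $(X_t, S^{-i}_t)$ given $(s^i_t, \pi^{\psi^\sigma}_t)$ under $(\tilde\sigma^i, \sigma^{-i})$ with the stage-game belief $\pi^{\psi^\sigma, i}_t(x_t, s^{-i}_t)/\sum_{x_t, s^{-i}_t}\pi^{\psi^\sigma, i}_t(x_t, s^i_t, s^{-i}_t)$, I obtain
\begin{align}
\ee^{\tilde\sigma^i_{t:T}, \sigma^{-i}_{t:T}, \psi^\sigma}\!\left[\sum_{\tau=t}^{T} u^i_\tau(X_\tau, A_\tau)\,\middle|\, s^i_t, \pi^{\psi^\sigma}_t\right] \leq \ee^{\tilde\sigma^i_t, \sigma^{-i}_t, \psi^\sigma}\!\left[U^i_{G_t(V_{t+1}, \pi^{\psi^\sigma}_t)}\,\middle|\, s^i_t\right].
\end{align}
The stage-game BNE hypothesis \eqref{eq:dp_bne_max} combined with $\hat\sigma_t = \sigma_t$ in \eqref{eq:dp_bne_eq} then bounds the right-hand side by $V^i_t(\pi^{\psi^\sigma}_t, s^i_t)$, with equality when $\tilde\sigma^i_t = \sigma^i_t$ (and the inductive equality holds from $t+1$ onward). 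Taking $t=1$ and averaging over $(X_1, S^i_1, \Pi^{\psi^\sigma}_1)$ yields the SIB-BNE inequality of Definition \ref{def:sibbne}.

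The main obstacle is the inductive reduction in the second paragraph: expressing the conditional expectation of the stage plus continuation payoff under the unilateral deviation $\tilde\sigma^i_t$ as the stage-game expected utility \eqref{eq:stagegame_exputil}. This requires that the CIB belief $\pi^{\psi^\sigma, i}_t$ yields the correct conditional distribution of $(X_t, S^{-i}_t)$ given $(s^i_t, \pi^{\psi^\sigma}_t)$ \emph{even when agent} $i$ \emph{unilaterally deviates from} $\sigma^i$, which is exactly the policy-independence content of Lemmas \ref{prop:policy-independent} and \ref{lemma:CIBbelief-privatebelief}. The rest is careful bookkeeping: using measurability of $(S^i_{t+1}, \Pi^{\psi^\sigma}_{t+1})$ with respect to the conditioning data to legitimately invoke the induction hypothesis, and matching the transition kernel in the dynamic game with the kernel \eqref{eq:update_condprob} of the stage-game.
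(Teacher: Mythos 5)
Your proposal is correct and follows essentially the same route as the paper's proof: restrict attention to SIB deviations via Theorem \ref{thm:closedness} (equivalently, Definition \ref{def:sibbne}), use the stage-game BNE condition \eqref{eq:dp_bne_max} together with $\hat\sigma_t=\sigma_t$ to obtain the one-step dynamic-programming inequality, propagate it by backward induction, and conclude at $t=1$. The only difference is that you spell out the kernel-matching with \eqref{eq:update_condprob} and the belief identification via Lemma \ref{lemma:CIBbelief-privatebelief}, details the paper compresses into the phrase that $\psi^\sigma$ is consistent with $\sigma$.
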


\begin{proof}

Suppose that for all $t\in\mathcal T$ there is a SIB strategy profile $\hat \sigma_t = (\hat\sigma^1_t, \hat\sigma^2_t,\ldots,\hat \sigma^N_t)$ that is a BNE of the stage game $G_t(V_{t+1}, \pi^{\psi^\sigma}_t)$. Then for all $\pi^{\psi^\sigma}_t \in \Delta(\mathcal X_t \times \mathcal S_t)^N,  s^i_t \in \mathcal S^i_s$
\begin{align}
& \ee^{\hat \sigma^i_t, \hat\sigma^{-i}_t, \psi^\sigma}[U^i_{G_t(V_{t+1},\pi^{\psi^\sigma}_t)}\mid  s^i_t]
\notag\\
= &
\max_{\tilde \sigma^i_t \in \Lambda^i_t}\ee^{\tilde \sigma^i_t, \hat\sigma^{-i}_t,\psi^\sigma}[u^i_t(X_t, A_t) + V^i_{t+1}(\psi^{\sigma}_{t+1}(\pi^{\psi^\sigma}_t, Z_{t+1}), S^i_{t+1})\mid  s^i_t].
\label{eq:dp_single}
\end{align}
Equation \eqref{eq:dp_single} holds for all $t \in \mathcal T$ with $V^i_{T+1}(\cdot, \cdot) = 0$ and for all $i\in\mathcal N$.
When $\hat \sigma_t = \sigma_t$ for all $t\in\mathcal T$, Equation \eqref{eq:dp_single} gives, for all $\pi^{\psi^\sigma}_t \in \Delta(\mathcal X_t \times \mathcal S_t)^N,  s^i_t \in \mathcal S^i_s$, 
\begin{align}
V^i_t(\pi^{\psi^\sigma}_t, s^i_t) = 
& \ee^{\sigma^i_t, \sigma^{-i}_t,\psi^\sigma}[U^i_{G_t(V_{t+1},\pi^{\psi^\sigma}_t)}\mid  s^i_t]
\notag\\
= &
\max_{\tilde\sigma^i_t \in \Lambda^i_t}\ee^{\tilde \sigma^i_t, \sigma^{-i}_t,\psi^\sigma}[u^i_t(X_t, A_t) + V^i_{t+1}(\psi^{\sigma}_{t+1}(\pi^{\psi^\sigma}_t, Z_{t+1}), S^i_{t+1})\mid  s^i_t]
\label{eq:thmsq_pf1}
\end{align}
for all $i\in\mathcal N$. 

By induction, \eqref{eq:thmsq_pf1}, and the fact that the update rule $\psi^\sigma$ is consistent with $\sigma$ we have, for all $i\in\mathcal N$ and $t\in\mathcal T$,
\begin{align}
\ee^{\tilde\sigma^i_{t:T}, \sigma^{-i}_{t:T},\psi^\sigma}[
\sum_{\tau = t}^T
u^i_\tau(X_\tau, A_\tau)\mid  s^i_\tau]
\leq 
\ee^{\sigma^i_{t:T}, \sigma^{-i}_{t:T},\psi^\sigma}[
\sum_{\tau = t}^T
u^i_\tau(X_\tau, A_\tau)\mid  s^i_\tau]
\label{eq:thmsq_pf2}
\end{align}
Then \eqref{eq:thmsq_pf2} at time $t=1$ gives 
\begin{align}
    \mathbb{E}^{\tilde\sigma^{i},\sigma^{-i}, \psi^{\sigma}}\{U^i(X_{1:T},A_{1:T})\}
    \leq
    \mathbb{E}^{\sigma, \psi^{\sigma}}\{U^i(X_{1:T},A_{1:T})\}
\end{align}
for all $\tilde \sigma^i \in \Lambda^i$ for all $i\in\mathcal N$.
Therefore, the strategy profile $\sigma$ is a SIB-BNE of the original dynamic game (sf. Definition \ref{def:sibbne}).

\end{proof}

\begin{remark}
\label{remark:stage-game}
Note that even when the stage-game $G_t(V_{t+1},\pi^{\psi^\sigma}_t)$ has a BNE $\hat\sigma_t$, it is possible that $\hat\sigma_t \neq \sigma_t$. Thus, the existence of BNE for every stage-game $G_t(V_{t+1},\pi^{\psi^\sigma}_t)$ is not sufficient to establish the existence of BNE for the original dynamic game.
\end{remark}

\begin{remark}
In the model of \cite{tang2022dynamic} when each team consists of one agent, a SIB BNE coincides with a SPCIB BNE introduced in \cite{tang2022dynamic} with an appropriate mapping of the information state as discussed in Remark \ref{remark:condindepend}.
\end{remark}

\begin{remark}
\label{remark:no_dp_solution}
 There may not be a solution for the set of value functions in the sequential decomposition equations described by \eqref{eq:dp_last}-\eqref{eq:dp_bne_eq} for all $i\in\mathcal N$ and for all $t\in\mathcal T$.
\end{remark}


\begin{remark}
In Definition \ref{def:CIB_update_n}, \eqref{eq:update-2_n} could be defined differently, and different \eqref{eq:update-2_n} would lead to different choices of $\psi$. And for any choice of \eqref{eq:update-2_n}, the claim of Theorem \ref{thm:sequential_decomposition} will still hold.
\end{remark}

\begin{remark}
\label{remark:valuefunction_discont}
The value functions of the sequential decomposition equations defined by Theorem \ref{thm:sequential_decomposition} (Eqs. \eqref{eq:dp_last}-\eqref{eq:dp_bne_eq} for all $i\in\mathcal N, t\in\mathcal T$) may not be continuous in the CIB belief $\Pi^{\psi^\sigma}_t$.
\end{remark}

\section{An illustrative example (Step 4)}
\label{sec:example}

In Section \ref{sec:sequential_decomposition} we argued (cf. Remark \ref{remark:no_dp_solution}) that the sequential decomposition equations defined by \eqref{eq:dp_last}-\eqref{eq:dp_bne_eq} for all $i\in\mathcal N, t\in\mathcal T$ may not have a solution, and that the value functions defined by \eqref{eq:dp_last}-\eqref{eq:dp_bne_eq} may not be continuous in the CIB belief $\Pi^{\psi^\sigma}_t$ (cf. Remark \ref{remark:valuefunction_discont}). In this section we present an example that illustrates/highlights the above remarks. In the example, a two-stage stochastic dynamic game, the agents' utilities depend on a parameter $c$. We show that: (i) the value functions of the corresponding sequential decomposition equations are not continuous in the CIB belief $\Pi^{\psi^\sigma}_t$; (ii) for certain values of $c$ a SIB-BNE exists.

\subsection{Model}

We consider the following two-stage stochastic dynamic game. There are two players/agents, Alice and Bob. At stage one, $t = 1$, the system's state $X_1$ is distributed on $\{-1, 1\}$ with $\mu_0(-1) = \prob(X_1 = -1) = 0.5$ and $\mu_1(1) = \prob(X_1 = 1)=0.5$. Alice observes perfectly $X_1$, i.e., $Y^{Alice}_1=X_1$, and takes action $A^{Alice}_1 \in 
\{-1, 1\}$;  $A^{Alice}_1$ is not observable by Bob and $Y^{Bob}_1=\emptyset$. Bob does not act at $t=1$.
At stage $2$, $t=2$, the system state is $X_2 = X_1 A^{Alice}_1$. 
Alice and Bob have a common observation $Z_2 = X_2A^{Alice}_1W_1 = X_1W_1$, where $W_1 \in \{-1, 1\}$ and $\prob(Z=i \mid  X_1 = i) = 1- p =  0.8, i \in \{-1, 1\}$, and there are no private observations, i.e., $Y^{Alice}_2=Y^{Bob}_2=\emptyset$.
Here $p = 0.2 = \prob(W_1 = -1)$.
Bob acts at $t=2$. Alice does not act at $t=2$.
Bob's action $A^{Bob}_2 \in \{-1, 1\}$. Alice's payoffs at $t=1$ and $t=2$ are 

\begin{align}
    u^{Alice}_1(X_1, A_1) = &\left\{
    \begin{array}{ll}
         c& \text{ if }A^{Alice}_1 = 1 \\
         0& \text{ if }A^{Alice}_1 = -1 
    \end{array}
    \right.
\end{align}
and
\begin{align}
    u^{Alice}_2(X_2, A_2) = &\left\{
    \begin{array}{ll}
         2& \text{ if }X_2 = 1, A^{Bob}_2=1\\
         1& \text{ if }X_2 = -1, A^{Bob}_2=-1\\
         0& \text{ otherwise }
    \end{array}
    \right.
\end{align}
respectively. Bob's payoffs are $u^{Bob}_t(X_t, A_t) = -u^{Alice}_t(X_t, A_t), t= 1, 2$.

The game's information structure is
\begin{align}
    H^{Alice}_1 = & \{X_1\} \\
    H^{Alice}_2 = & \{X_1, A^{Alice}_1, X_2, Z_2\} \\
    H^{Bob}_1 = & \emptyset \\
    H^{Bob}_2 = & \{Z_2\}
\end{align}
where $H^{Alice}_t, H^{Bob}_t, t=1,2$, describe the information available to Alice and Bob, respectively, at stages $1$ and $2$.

This example has the same dynamics and utility functions as Example 3 in \cite{tang2022dynamic}, but Bob doesn't observe Alice's action as in \cite[Example 3]{tang2022dynamic}.

\subsection{Sequential decomposition}

Since Alice perfectly observes the state at both times, i.e., $Y^{Alice}_1 = X_1$ and $Y^{Alice}_2 = X_2$, and Bob doesn't have private information, $S^{Alice}_1 = X_1, S^{Bob}_1 = \emptyset$ are sufficient private information for Alice and Bob at stage $t=1$, respectively, and $S^{Alice}_2 = X_2, S^{Bob}_2 = \emptyset$ are sufficient private information for Alice and Bob, respectively, at stage $t=2$ according to Definition \ref{def:sufficient}.

Suppose $\sigma = (\sigma_1, \sigma_2) = (\sigma^{Alice}_1, \sigma^{Bob}_2)$ is a SIB strategy and $\psi^\sigma$ is the corresponding update rule. Here $\sigma$ is an equilibrium strategy candidate which serves as the strategy prediction for Alice and Bob.
Note that
$\Pi_1^{\psi^\sigma, Alice}(x_1) = \mu_0(x_1)$ and $\Pi_1^{\psi^\sigma, Bob}(x_1) = \mu_0(x_1)$ for all $x_1 \in \mathcal X_1$.

To get a BNE using the sequential decomposition of Theorem \ref{thm:sequential_decomposition}, we first consider the stage-game $G_2(0, \pi_2^{\psi^\sigma})$ at time $2$. Since Bob is the only agent who acts at time $2$ and $S^{Bob}_2 = \emptyset$, any BNE $\sigma_2$ of $G_2(0, \pi_2^{\psi^\sigma})$ must satisfy
\begin{align}
     \hat\sigma_2^{Bob} = & \argmax_{\tilde\sigma^{Bob}_2} \ee^{\tilde\sigma^{Bob}_2, \psi^\sigma} [ u^{Bob}_2(X_2, A_2)]
     \notag\\
     = & \argmax_{\tilde\sigma^{Bob}_2}\Big(
     -2 \prob^{\tilde\sigma^{Bob}_2, ^{\psi^\sigma}}(X_2 = A^{Bob}_2 = 1)
     -
     \prob^{\tilde\sigma^{Bob}_2, \psi^\sigma}(X_2 = A^{Bob}_2 = -1)
     \Big)
     \notag\\
     = & \argmax_{\tilde\sigma^{Bob}_2}\Big(
     -2 \pi^{\psi^\sigma,Bob}_2(1)\tilde\sigma^{\psi^\sigma,Bob}_2(1 \mid  \pi_2^{\psi^\sigma})
     \notag\\
     & \hspace{2cm} - (1 - \pi^{\psi^\sigma,Bob}_2(1))(1 - \tilde\sigma^{\psi^\sigma,Bob}_2(1 \mid  \pi_2^{\psi^\sigma}))
     \Big)
     \label{eq:example_bne_time2}
\end{align}
From \eqref{eq:example_bne_time2} we conclude that 
one of the equilibrium SIB strategies is given by
\begin{align*}
    & \sigma^{Bob}_2(\pi_2^{\psi^\sigma}) = 1 \text{, if } \pi^{\psi^\sigma,Bob}_2(1) \leq 1/3, 
    \\
    & \sigma^{Bob}_2(\pi_2^{\psi^\sigma}) = 0  \text{, if } \pi^{\psi^\sigma,Bob}_2(1) > 1/3,
\end{align*}
or equivalently
\begin{align}
    & \sigma^{Bob}_2(\pi_2^{\psi^\sigma}) = \mathds{1}(\pi^{\psi^\sigma,Bob}_2(1) \leq 1/3) 
    \label{eq:bob_bestresponse}
\end{align}
Note that $\sigma^{Bob}_2(\pi_2^{\psi^\sigma})$ can take any value in $[0, 1]$ if $\pi^{\psi^\sigma,Bob}_2(1) = 1/3$ and $\sigma_2$ is still a BNE of the stage-game.

Alice's sufficient private information at time $2$ is $S^{Alice}_2 = X_2$.
With the stage-game equilibrium SIB strategy $\sigma^{Bob}_2(\pi_2)$ given by \eqref{eq:bob_bestresponse}, the value function for Alice at $t=2$ is then given, according to \eqref{eq:dp_valupdate}, by
\begin{align}
     V^{Alice}_2(\pi_2^{\psi^\sigma}, x_2) = &  \ee^{\sigma_2, \psi^\sigma} [ u^{Alice}_2(X_2, A_2) \mid  x_2]
     \notag\\
     = & \left\{
     \begin{array}{ll}
           2 \mathds{1}(\pi^{\psi^\sigma,Bob}_2(1) \leq 1/3)  & \text{if } x_2 = 1
          \\
          1 - \mathds{1}(\pi^{\psi^\sigma,Bob}_2(1) \leq 1/3)  & \text{if } x_2 = - 1
     \end{array}\right.
\end{align}

Given the above value functions at time $t=2$, we now consider the stage-game $G_1(V_2, \pi^{\psi^\sigma}_1)$ at time $t=1$.
The utility for the stage-game for Alice is given as follows.
\begin{align}
    U^{Alice}_{G_1(V_2, \pi^{\psi^\sigma}_1)}
    = u^{Alice}_1(X_1, A_1) + V^{Alice}_2(\psi^\sigma_2(\pi_1, Z), X_2)
\end{align}
If Alice uses the SIB strategy $\tilde \sigma^{Alice}_1$, the expected utility of the stage-game can be calculated for $X_1=-1$ and $X_1=1$, according to \eqref{eq:stagegame_exputil}, by
\begin{align}
    & \ee^{\tilde\sigma^{Alice}_1, \psi^\sigma}[
    U^{Alice}_{G_1(V_2, \pi^{\psi^\sigma}_1)} \mid  X_1 = -1] 
    \notag\\
    = & c\tilde\sigma^{Alice}_1(1\mid -1) + \ee^{\tilde\sigma^{Alice}_1, \psi^\sigma}[V^A_2(\psi_{2}^\sigma(\pi_1^{\psi^\sigma}, X_1 W_1), X_1 A^{Alice}_1) \mid  X_1 = -1]
    \notag\\
    = & (1 + c)(1 - \tilde\alpha_1) + (3\tilde\alpha_1 - 1)((1-p)
    \mathds{1}(q_{-1} \leq 1/3) 
     + p\mathds{1}(q_{1} \leq 1/3) )
    \notag\\
    =: & r^A_{-1}(\tilde\alpha_1, q)
    \\
    & \ee^{\tilde\sigma^{Alice}_1, \psi^\sigma}[
    U^{Alice}_{G_1(V_2, \pi^{\psi^\sigma}_1)} \mid  X_1 = 1]  
    \notag\\
    = & c\tilde\sigma^{Alice}_1(1\mid 1) + \ee^{\tilde\sigma^{Alice}_1, \psi^\sigma}[V^A_2(\psi^\sigma_{2}(\pi_1^{\psi^\sigma}, X_1 W_1), X_1 A^{Alice}_1) \mid  X_1 = 1]
    \notag\\
    = & 1 + (c - 1)\tilde\alpha_2 + (3\tilde\alpha_2 - 1)((1-p)\mathds{1}(q_{1} \leq 1/3)  + p\mathds{1}(q_{-1} \leq 1/3) )
    \notag\\
    =: & r^A_{1}(\tilde\alpha_2, q)
\end{align}
where $q = (q_{-1}, q_1)$, $q_{-1} = \psi_{2}^{\sigma, Bob}(\pi_1^{\psi^\sigma}, -1)(1)$ and $q_1= \psi_{2}^{\sigma, Bob}(\pi_1^{\psi^\sigma}, 1)(1)$ are the CIB beliefs $\pi^{\psi^\sigma,Bob}_2(1)$ of $\{X_2 = 1\}$ when $Z=-1$ and $Z=1$, respectively, and $\tilde\alpha = (\tilde\alpha_1, \tilde\alpha_2)$, $\tilde\alpha_1 = \tilde\sigma^{Alice}_1(-1\mid -1), \tilde\alpha_2 = \tilde\sigma^{Alice}_1(1\mid 1)$ represents Alice's SIB strategy $\tilde\sigma^{Alice}_1$.

Note that from Bayes' rule in Definition \ref{def:CIB_update_n}, under the SIB strategy $\sigma^{Alice}_1$, represented by $\alpha_1 = \sigma^{Alice}_1(-1\mid -1)$ and $\alpha_2 = \sigma^{Alice}_1(1\mid 1)$, we have
\begin{align}
    & q_{-1} = \psi_{2}^{\psi^\sigma,Bob}(\pi_1^{\psi^\sigma}, -1)(1) = \frac{\prob^{\alpha}(X_2 = 1, Z= -1)}{\prob^{\alpha}(Z = -1)} = \alpha_2 p + \alpha_1 (1-p)
    \label{eq:qm1}
    \\
    & q_1 = \psi_{2}^{\psi^\sigma,Bob}(\pi_1^{\psi^\sigma}, 1)(1) = \frac{\prob^{\alpha}(X_2 = 1, Z= 1)}{\prob^{\alpha}(Z = 1)} = \alpha_2 (1 - p) + \alpha_1 p 
    \label{eq:q1}
\end{align}
Therefore, a SIB strategy $\hat\sigma^{Alice}_1$, represented by $\hat\alpha_1 = \hat\sigma^{Alice}_1(-1\mid -1)$ and $\hat\alpha_2 = \hat\sigma^{Alice}_1(1\mid 1)$, is a BNE of the stage-game $G_1(V_2, \pi^{\psi^\sigma}_1)$ at time $t=1$ if 
\begin{align}
    & \hat\alpha_1 \in 
    \argmax_{\tilde\alpha_1} 
    r^A_{-1}(\tilde\alpha_1, ( \alpha_2 p + \alpha_1 (1-p), \alpha_2 (1 - p) + \alpha_1 p ))
    \\
    & \hat\alpha_2 \in 
    \argmax_{\tilde\alpha_2} 
    r^A_{1}(\tilde\alpha_2, ( \alpha_2 p + \alpha_1 (1-p), \alpha_2 (1 - p) + \alpha_1 p ))
\end{align}
Consequently, the SIB strategy $\sigma^{Alice}_1$, represented by $\alpha_1 = \sigma^{Alice}_1(-1\mid -1)$ and $\alpha_2 = \sigma^{Alice}_1(1\mid 1)$ will satisfy the sequential decomposition equations \eqref{eq:dp_bne_max}-\eqref{eq:dp_bne_eq} if
\begin{align}
    & \alpha_1 \in 
    \argmax_{\tilde\alpha_1} 
    r^A_{-1}(\tilde\alpha_1, ( \alpha_2 p + \alpha_1 (1-p), \alpha_2 (1 - p) + \alpha_1 p ))
    \label{eq:br_rm1}
    \\
    & \alpha_2 \in 
    \argmax_{\tilde\alpha_2} 
    r^A_{1}(\tilde\alpha_2, ( \alpha_2 p + \alpha_1 (1-p), \alpha_2 (1 - p) + \alpha_1 p ))
    \label{eq:br_r1}
\end{align}

\begin{remark}
Note that the functions $r^A_{-1}(\tilde\alpha_1, q)$ and $r^A_{1}(\tilde\alpha_2, q)$ are not continuous in $q$. Thus existence of equilibria cannot be established by the standard method relying on the continuity of the utility functions, and there may not no equilibria in the general case.
\end{remark}

\subsection{Existence of SIB-BNE under conditions on the instantaneous utility.}

The stage-game $G_1(V_2, \pi^{\psi^\sigma}_1)$ is a normal-form game with a fixed $\sigma_1$.
According to Remark \ref{remark:stage-game}, a BNE $\hat \sigma$ of $G_1(V_2, \pi^{\psi^\sigma}_1)$ could be different from $\sigma_1$ and the existence of a regular BNE of $G_1(V_2, \pi^{\psi^\sigma}_1)$ is not sufficient to satisfy \eqref{eq:dp_bne_eq} at time $t=1$. In order to apply equilibrium existence results for normal-form games to the sequential decomposition at time $t=1$, we introduce an agent $0$ who picks the $q$-belief $q = (q_{-1}, q_{1})$ so that \eqref{eq:dp_bne_eq} is satisfied.

Formally, we construct an augmented stage-game $\hat G_1$ between Alice and agent $0$. Alice chooses $\tilde\alpha = (\tilde\alpha_1, \tilde\alpha_2)$ and agent $0$ chooses $\tilde q = (\tilde q_{-1}, \tilde q_{1})$.
Alice's utility is
\begin{align}
    r^A_1(\tilde \alpha, \tilde q)
    = & 0.5 r^A_{-1}(\tilde \alpha_1, \tilde q) + 0.5 r^A_{1}(\tilde \alpha_2, \tilde q)
    \notag\\
    = & 0.5c(1 - \tilde \alpha_1 + \tilde \alpha_2) + 0.5(2 - \tilde \alpha_1-\tilde \alpha_2)
     \notag\\
    & + 0.5(3(\tilde \alpha_2 p + \tilde \alpha_1 (1-p))- 1)\mathds{1}(\tilde q_{-1} \leq 1/3)
    \notag\\
    & + 0.5(3 (\tilde \alpha_2 (1 - p) + \tilde \alpha_1 p) - 1)\mathds{1}(\tilde q_{1} \leq 1/3).
     \label{eq:agument_alice}
\end{align}
Agent $0$'s utility is
\begin{align}
r^0_1(\tilde\alpha, \tilde q) = 
    -(\tilde q_{-1}  - \tilde \alpha_2 p - \tilde \alpha_1 (1-p) )^2
     - (\tilde q_{1}  - \tilde \alpha_2 (1 - p) - \tilde \alpha_1 p)^2.
     \label{eq:agument_zero}
\end{align}
Both Alice and agent $0$ are utility maximizers.
The game $\hat G_1$ with utilities \eqref{eq:agument_zero}-\eqref{eq:agument_alice} is a normal-form game with strategies $\tilde \alpha = (\tilde \alpha_1, \tilde \alpha_2)$  $\tilde q = (\tilde q_{-1}, \tilde q_{1})$. 
Since the utility \eqref{eq:agument_zero} of agent $0$ is a quadratic function, any best response by agent $0$ must satisfy $\tilde q_{-1}  = \tilde \alpha_2 p + \tilde \alpha_1 (1-p), \tilde q_{1}  = \tilde \alpha_2 (1 - p) + \tilde \alpha_1 p$.

Note that in the augmented stage-game $\hat G_1$, the utility function $r^A_1(\tilde \alpha, \tilde q)$ is not continuous in $\tilde q$. To show the existence of a Nash equilibrium for $\hat G_1$, we proceed to apply existence results for games with discontinuous utilities in \cite{barelli2013note}. 

Specifically, Proposition 2.4 of \cite{barelli2013note} guarantees the existence of a Nash equilibrium for games satisfying the generalized better reply secure property.
From Definition 2.3 in \cite{barelli2013note}, the stage game is generalized better reply secure if for any $(\bar\alpha, \bar q)$ not an equilibrium, at least one of the followings is true
\begin{itemize}
    \item We can find an $\epsilon > 0$ and a closed correspondence $\phi^0(\tilde\alpha, \tilde q)$ such that 
    \begin{align}
        r^0_1(\tilde\alpha, \phi^0(\tilde\alpha, \tilde q)) \geq r^0_1(\bar\alpha, \bar q) + \epsilon
    \end{align}
    for all $\tilde\alpha_1 \in (\bar\alpha_1 - \epsilon, \bar\alpha_1 + \epsilon)$, $\tilde\alpha_2 \in (\bar\alpha_2 - \epsilon, \bar\alpha_2 + \epsilon)$, $\tilde q_{-1} \in (\bar q_{-1} - \epsilon, \bar q_{-1} + \epsilon)$, $\tilde q_{1} \in (\bar q_{1} - \epsilon, \bar q_{1} + \epsilon)$
    \item We can find an $\epsilon > 0$ and a closed correspondence $\phi^A(\tilde\alpha, \tilde q)$ such that 
    \begin{align}
        r^A_1(\phi^A(\tilde\alpha, \tilde q), \tilde q) \geq r^A_1(\bar\alpha, \bar q) + \epsilon
    \end{align}
    for all $\tilde\alpha_1 \in (\bar\alpha_1 - \epsilon, \bar\alpha_2 + \epsilon)$, $\tilde\alpha_2 \in (\bar\alpha_2 - \epsilon, \bar\alpha_2 + \epsilon)$, $\tilde q_{-1} \in (\bar q_{-1} - \epsilon, \bar q_{-1} + \epsilon)$, $\tilde q_{1} \in (\bar q_{1} - \epsilon, \bar q_{1} + \epsilon)$
\end{itemize}

In Appendix \ref{appendix:example}, we show that when $c > 24$ the augmented stage-game $\hat G_1$ is generalized better reply secure. 
Thus, there exists a Nash equilibrium of the augmented state-game $\hat G_1$ according to \cite[Proposition 2.4]{barelli2013note}.

Consider any Nash equilibrium $(\alpha, q)$ of $\hat G_1$. Since $q$ is a best response to $\alpha$ for agent $0$, from agent $0$'s utility \eqref{eq:agument_zero} we have 
\begin{align}
    & q_{-1} = \alpha_2 p + \alpha_1 (1-p)
    \\
    & q_1  = \alpha_2 (1 - p) + \alpha_1 p 
\end{align}
Furthermore, since $\alpha$ is a best response to $q$ for Alice in $\hat G_1$,
\begin{align}
    \alpha \in &\argmax_{\tilde \alpha } 
    \Big(
    0.5 r^A_{-1}(\tilde \alpha_1, q) + 0.5 r^A_{1}(\tilde \alpha_2, q)
    \Big)
    \notag\\
    = & \argmax_{\tilde \alpha } 
    \Big(
    0.5 r^A_{-1}(\tilde \alpha_1, (\alpha_2 p + \alpha_1 (1-p), \alpha_2 (1 - p) + \alpha_1 p )) 
    \notag\\
    & \hspace{2cm}+ 0.5 r^A_{1}(\tilde \alpha_2, (\alpha_2 p + \alpha_1 (1-p), \alpha_2 (1 - p) + \alpha_1 p ))
    \Big)
    \notag\\
    = & \Big(\argmax_{\tilde \alpha_1 } r^A_{-1}(\tilde \alpha_1, (\alpha_2 p + \alpha_1 (1-p), \alpha_2 (1 - p) + \alpha_1 p )) , 
    \notag\\
    & \hspace{1cm} \argmax_{\tilde \alpha_2 }r^A_{1}(\tilde \alpha_2, (\alpha_2 p + \alpha_1 (1-p), \alpha_2 (1 - p) + \alpha_1 p ))
    \Big)
\end{align}
Therefore, \eqref{eq:br_rm1}-\eqref{eq:br_r1} hold for $\alpha$, and consequently the sequential decomposition requirement \eqref{eq:dp_bne_max}-\eqref{eq:dp_bne_eq} is satisfied at $t=1$ by the SIB strategy $\sigma^{Alice}_1$ represented by $\alpha$, and we establish the existence of a SIB equilibrium based on Theorem \ref{thm:sequential_decomposition}.

\section{The case with no common observations}

\label{sec:specialcase}
We consider the model of Section \ref{sec:model} but we assume that the agents have no common observations, that is, 
\begin{align}
    Z_t = \emptyset \quad \forall t \in \mathcal T .
\end{align}
The system's dynamics, the agents' private observations, the functional form of the agents' strategies, their utilities, and the equilibrium concept (BNE) remain the same as in Section \ref{sec:model}.

Even though the agents have no common observations in this special case, we can still define SIB strategies by Definition \ref{def:SIB_strategy_n}, and construct the consistent CIB belief system according to Definition \ref{def:CIB_update_n} with $Z_t = \emptyset \, \forall t \in \mathcal T$.

Since there is no common observations, for any realization we always have
\begin{align}
    & \sum_{\hat{x}_{t+1},\hat{s}_{t+1}} F_{t}^i(\hat{x}_{t+1},\hat{s}_{t+1},z_{t+1})(\pi^{\psi^\sigma}_t;\sigma^{-i}_t) 
    \notag\\
    = & \sum_{\hat{x}_{t+1},\hat{s}_{t+1}} F_{t}^i(\hat{x}_{t+1},\hat{s}_{t+1})(\pi^{\psi^\sigma}_t;\sigma^{-i}_t) = 1 > 0
\end{align}
Therefore, case (ii) in Definition \ref{def:CIB_update_n} would never happen, and \eqref{eq:cib_bayesrule} can be simplified to
\begin{align}
& \pi_{t+1}^{\psi^{\sigma},i}(x_{t+1},s_{t+1}) 
\notag\\
= & \frac{F_{t}^i(x_{t+1},s_{t+1})(\pi^{\psi^\sigma}_t;\sigma^{-i}_t)}{\sum_{\hat{x}_{t+1},\hat{s}_{t+1}} F_{t}^i(\hat{x}_{t+1},\hat{s}_{t+1})(\pi_t^{\psi^\sigma};\sigma^{-i}_t)}
\notag\\
= & F_{t}^i(x_{t+1},s_{t+1})(\pi^{\psi^\sigma}_t;\sigma^{-i}_t)
\notag\\
= & 
\sum_{y_{t+1},x_t,s_t,a_t} \Bigg[\mathbb{P}\{y_{t+1},x_{t+1}\mid x_t,a_t\}\left(\prod_{j}\mathbbm{1}\{s_{t+1}^j = \phi_{t+1}^j(s_t^j,y_{t+1}^j,a_t^j)\}\right)\nonumber\\
& \hspace{50pt} \left(\frac{1}{\vert A_t^i\vert}\prod_{j\neq i} \sigma^j_t(a^j_t)(\pi^\psi_t,s_t^j) \right) \pi_t^{\psi,i}(x_t,s_t) \Bigg].
\label{eq:cib_bayesrule_nocommon}
\end{align}
Based on \eqref{eq:cib_bayesrule_nocommon} we can write \begin{align}
    &\Pi_{t+1}^{\psi^\sigma,i} = \psi_{t+1}^{\sigma,i}(\Pi_t^{\psi^\sigma}) \quad \forall i \in \mathcal N,
    \label{eq:psi_it_nocommon}
\\
    &\Pi_{t+1}^{\psi^\sigma} = \psi^{\sigma}_{t+1}(\Pi_t^{\psi^\sigma}).
    \label{eq:psi_t_nocommon}
\end{align}

In other words, given a SIB strategy $\sigma$, the update rule $\psi^\sigma$ are deterministic functions given by \eqref{eq:psi_t_nocommon}, and the corresponding consistent CIB belief system $\Pi_{t}^{\psi^\sigma}, t \in \mathcal T$, evolves in a deterministic manner.
Furthermore, since case (ii) in Definition \ref{def:CIB_update_n} never happens without common observations, the update rule $\psi_{t+1}^{\sigma,i}$ given by \eqref{eq:cib_bayesrule_nocommon} becomes exactly the Bayes rule. As a result, the CIB belief $\Pi_{t}^{\psi^\sigma,i}$ becomes a regular PMF given by
\begin{align}
    \Pi_{t}^{\psi^\sigma,i}(x_t, s_t) = \prob^{\tilde g^i, \sigma^{-i}}(x_t, s_t) \quad \forall i \in \mathcal N
\end{align}
where $\tilde g^i$ denotes the uniform strategy (i.e., the strategy that chooses every action $a^i_t \in \mathcal A^i_t$ with equal probability for all $t \in \mathcal T$).

\begin{remark}
    If the $N$ agents have identical utilities, i.e. we have a dynamic team problem, then  $\Pi_{t}^{\psi^\sigma}, t \in \mathcal T$ is similar to the common knowledge that appears in \cite{witsenhausen1973standard} where a dynamic team is analyzed. The common knowledge in \cite{witsenhausen1973standard} is a sequence (over time) of PMFs on the system's history $H_t, t \in \mathcal T$. These PMFs evolve in a deterministic manner, similar to \eqref{eq:cib_bayesrule_nocommon} for $\Pi_{t}^{\psi^\sigma}, t \in \mathcal T$, in the model of this section.
\end{remark}

For this special case with no common observations, Theorem \ref{thm:sequential_decomposition} becomes
\begin{corollary}
\label{cor:no_common}
    Consider a SIB strategy profile $\sigma = \{\sigma_t, t \in \mathcal T\}$  and the corresponding update rule $\psi^\sigma = \{\psi^\sigma_t, t \in \mathcal T\}$ defined by \eqref{eq:psi_it_nocommon}-\eqref{eq:psi_t_nocommon} for the model of this section.
Define 
\begin{align}
    V_{T+1}^i(\cdot, \cdot) = 0 \text{ for all }i
\end{align}
\begin{align}
& V^i_t(\pi^{\psi^\sigma_t}, s^i_t) =  \ee^{\sigma_t, \psi^\sigma} [ U^i_{G_t(V_{t+1}, \pi^{\psi^\sigma_t})}\mid  s^i_t ]
\end{align}
where $U^i_{G_t(V_{t+1},\pi^{\psi^\sigma}_t)} = u^i_t(X_t, A_t) + V^i_{t+1}(\psi^{\sigma}_{t+1}(\pi^{\psi^\sigma}_t), S^i_{t+1})$, and in the conditional expectation $\ee^{\sigma_t, \psi^\sigma}[\cdot]$, the distribution of $(X_t, S_t)$ conditioned on $S^i_t$ is given by $\pi^{\psi^\sigma,i}_t(x_t, s^{-i}_t)$, $A^i_t, i \in \mathcal N$, are generated by $\sigma^i_t(a^i_t\mid s^i_t, \pi^{\psi^\sigma}_t)$, $S^i_{t+1}$ conditioned on $(X_t, S_t, A_t)$ follows the conditional probability $\sum_{x_{t+1}, s^{-i}_{t+1}}\prob(x_{t+1}, s_{t+1} \mid  x_t, s_t, a_t)$ given by
    \begin{align}
    & \prob(x_{t+1}, s_{t+1} \mid  x_t, s_t, a_t)
    \notag\\
    =& \sum_{y_{t+1}} \mathbb{P}\{x_{t+1}\mid x_t,a_t\}
    \mathbb{P}\{y_{t+1}\mid x_{t+1},a_t\}
    \left(\prod_{j}\mathbbm{1}\{s_{t+1}^j = \phi_{t+1}^j(s_t^j,y_{t+1}^j,a_t^j)\}\right).
    \label{eq:update_condprob_common}
    \end{align}
If for all $t \in \mathcal T$, there is a SIB strategy profile $\hat\sigma_t$ such that
$\hat\sigma_t$ is a BNE of the stage-game $G_t(V_{t+1}, \pi^{\psi^\sigma_t})$, that is,
\begin{align}
& \ee^{\hat \sigma^i_t, \hat\sigma^{-i}_t, \psi^\sigma}[U^i_{G_t(V_{t+1},\pi^{\psi^\sigma}_t)}\mid  s^i_t]
 = \max_{\tilde \sigma^i_t \in \Lambda^i_t}\ee^{\tilde \sigma^i_t, \hat\sigma^{-i}_t, \psi^\sigma}[U^i_{G_t(V_{t+1},\pi^{\psi^\sigma}_t)}\mid  s^i_t]
 \label{eq:dp_bne_max_nocommon}
\end{align}
for all $i \in \mathcal N$, and
\begin{align}
\hat\sigma_t = \sigma_t,
\label{eq:dp_bne_eq_nocommon}
\end{align}
then the SIB strategy profile $\sigma$ is a SIB-BNE of the dynamic game without common observations defined in this section.
\end{corollary}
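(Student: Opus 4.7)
The plan is to apply Theorem 2 directly, since the no-common-observations setting of Section 6 is just the special case of the general model obtained by setting $Z_t = \emptyset$ for all $t$. Every ingredient from Sections 4--5 (sufficient private information, SIB strategies, the consistent CIB belief system, and the closedness of best response property) carries over verbatim under this restriction; the corollary differs from Theorem 2 only in two simplifications that I would record explicitly before invoking the theorem.

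First, because there is no $z_{t+1}$ to condition on, one always has $\sum_{\hat x_{t+1}, \hat s_{t+1}} F^i_t(\hat x_{t+1}, \hat s_{t+1})(\pi^{\psi^\sigma}_t; \sigma^{-i}_t) = 1$, so case (ii) of Definition 4 is never triggered and the Bayesian update rule \eqref{eq:cib_bayesrule_nocommon} applies at every step. Consequently, the update rule $\psi^\sigma_{t+1}$ becomes a deterministic function of $\pi^{\psi^\sigma}_t$ alone, giving \eqref{eq:psi_it_nocommon}--\eqref{eq:psi_t_nocommon}. Second, in the stage-game utility \eqref{eq:stagegame_utility}, the term $V^i_{t+1}(\psi^\sigma_{t+1}(\pi^{\psi^\sigma}_t, Z_{t+1}), S^i_{t+1})$ collapses to $V^i_{t+1}(\psi^\sigma_{t+1}(\pi^{\psi^\sigma}_t), S^i_{t+1})$, and the joint conditional probability \eqref{eq:update_condprob} for $(Z_{t+1}, X_{t+1}, S_{t+1})$ reduces to the joint conditional probability \eqref{eq:update_condprob_common} for $(X_{t+1}, S_{t+1})$; this is precisely the stage-game structure appearing in the statement of the corollary.

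With these two observations in hand, the proof of Theorem 2 goes through line by line. Lemmas 3, 4, and 5 (POMDP structure, Markov transitions, utility equivalence) never rely on a non-empty $Z_t$, so they still imply that the best response problem for agent $i$ against a fixed $\sigma^{-i}$ is a POMDP with state $\tilde X_t = (S_t, \Pi^{\psi^\sigma}_t, X_t)$, and Theorem 1 (closedness of best response) still yields that $(S^i_t, \Pi^{\psi^\sigma}_t)$ is an information state for agent $i$. The backward induction in the proof of Theorem 2 then proceeds unchanged: assuming \eqref{eq:dp_bne_max_nocommon}--\eqref{eq:dp_bne_eq_nocommon} hold at every $t$, induction on $t$ yields
\begin{align*}
\ee^{\tilde\sigma^i_{t:T}, \sigma^{-i}_{t:T}, \psi^\sigma}\!\Big[\sum_{\tau=t}^T u^i_\tau(X_\tau, A_\tau) \,\Big|\, s^i_t\Big]
&\le \ee^{\sigma^i_{t:T}, \sigma^{-i}_{t:T}, \psi^\sigma}\!\Big[\sum_{\tau=t}^T u^i_\tau(X_\tau, A_\tau) \,\Big|\, s^i_t\Big]
\end{align*}
for every SIB strategy $\tilde\sigma^i$ and every $i \in \mathcal N$; evaluating at $t=1$ delivers the SIB-BNE property.

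The only genuinely nontrivial step is bookkeeping: I would need to verify that every formula in the statement of the corollary (in particular \eqref{eq:update_condprob_common} and the reduced stage-game utility) is the correct specialization of its counterpart in Theorem 2 under $Z_t = \emptyset$, and that the deterministic update rule \eqref{eq:psi_it_nocommon}--\eqref{eq:psi_t_nocommon} still satisfies the consistency conditions of Definition 4 (which it does, since it coincides with Bayes' rule whenever Bayes' rule is well-defined). Once these mechanical checks are completed, no new argument is required and Corollary 1 follows as a direct specialization of Theorem 2.
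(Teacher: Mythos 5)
Your proposal is correct and matches the paper's treatment: the paper likewise presents Corollary~\ref{cor:no_common} as a direct specialization of Theorem~\ref{thm:sequential_decomposition} to $Z_t=\emptyset$, after noting that case (ii) of Definition~\ref{def:CIB_update_n} never occurs, so the update rule reduces to the deterministic Bayes update \eqref{eq:psi_it_nocommon}--\eqref{eq:psi_t_nocommon} and the stage-game quantities specialize to \eqref{eq:update_condprob_common}. No further comparison is needed.
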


\begin{remark}
    The SIB-BNE strategy profiles $\{\sigma_t, t \in \mathcal T\}$ determined by sequential decomposition in Corollary \ref{cor:no_common}, along with the beliefs $\{\Pi^{\psi^{\sigma}}_t, t \in \mathcal T\}$ are also Perfect Bayesian Equilibria (PBE) \cite{fudenberg1991game}. This is true because $\{\sigma_t, t \in \mathcal T\}$ satisfy sequential rationality (Eq. \eqref{eq:dp_bne_max_nocommon}) and consistency holds because the beliefs $\{\Pi^{\psi^{\sigma}}_t, t \in \mathcal T\}$ are always updated by Bayes rule.
\end{remark}

\section{Conclusion}

We considered stochastic dynamic games where the underlying system is dynamic, the strategic agents' actions are hidden (not observable) and their information is asymmetric. We presented an approach for the computation of BNE strategy profiles that are based on a compressed version of the agents' information and can be determined sequentially in time moving backwards, if each step of this backward procedure has a solution. The approach highlights: (i) the importance of common information/common knowledge in identifying BNE strategy profiles that can be sequentially computed; (ii) the difference between common information that is sufficient for decision-making purposes in games and common information that is sufficient for decision-making purposes in teams. The difference is due to the fact that agents have an incentive to deviate from their predicted strategies in games whereas they don't have such an incentive in teams. As a consqence of this incentive, at each time instant each agent has his own view/belief of the game's status based on the common information, but all these different views/beliefs are common knowledge among all agents. As a result the CIB belief system is described by the sequence $\Pi^\psi_{1:T}$ specified by Definition \ref{def:CIB_belief_system}.

Our investigation focused on determining SIB-BNE strategy profiles for the games under consideration. We note that the SIB-BNE strategy profiles determined by our methodology are also Perfect Bayesian Equilibrium (PBE) strategy profiles when the agents have no common observations (i.e., for the model of Section \ref{sec:specialcase}), but this is not true when the agents have common observations (the general model of Section \ref{sec:model}). Determining PBE strategy profiles for the general model of Section \ref{sec:model} is an interesting problem worthy of investigation.

\appendix
\subsection{Sufficient Information}
\label{app:sufficient_information}

We compare conditions (i)-(iii) of Definition \ref{def:sufficient} to the conditions of Definition 2 in \cite{companion}; for ease of readability, we include the definition from \cite{companion} below.
\begin{definition}[Sufficient private information \cite{companion}]
	We say $S_t^i=\zeta_t^i(P_t^i,C_t;g_{1:t-1})$, $i\in\mathcal{N}$, $t\in\mathcal{T}$, is \textit{sufficient private information} for the agents if, 
	\begin{enumerate}[(i)]
		\item it can be updated recursively as \vspace*{-2pt}
		\begin{gather}
		S_t^{i}=\phi_t^i(S_{t-1}^{i},H_t^i\backslash H_{t-1}^i;g_{1:t-1})  \text{ for } t\in\mathcal{T}\backslash\{1\}, \label{eq:sufficientupdate}
		\end{gather}
		\item for any strategy profile $g$ and for all realizations $\{c_t,p_t,p_{t+1},z_{t+1},a_t\}\in\mathcal{C}_t\times\mathcal{P}_t\times\mathcal{P}_{t+1}\times\mathcal{Z}_{t+1}$ of positive probability,
		\begin{align}
		\mathbb{P}^{g_{1:t}}\left\{\hspace*{-2pt}s_{t+1}\hspace*{-1pt},\hspace*{-1pt}z_{t+1}\hspace*{-1pt}\mid p_t\hspace*{-1pt},\hspace*{-1pt}c_t\hspace*{-1pt},\hspace*{-1pt}a_t\hspace*{-2pt}\right\}\hspace*{-3pt}=\hspace*{-2pt}\mathbb{P}^{g_{1:t}}\hspace*{-1pt}\left\{\hspace*{-2pt}s_{t+1}\hspace*{-1pt},\hspace*{-1pt}z_{t+1}\hspace*{-1pt}\mid s_t\hspace*{-1pt},\hspace*{-1pt}c_t\hspace*{-1pt},\hspace*{-1pt}a_t\hspace*{-2pt}\right\}\hspace*{-1pt},\hspace*{-4pt}\label{eq:sufficientdynamic}
		\end{align}
		where $s_{\tau}^{1:N}=\zeta_{\tau}^{1:N}(p_{\tau}^{1:N},c_{\tau};g_{1\hspace*{-1pt}:\tau-1})$ for $\tau\in\mathcal{T}$;	
		\item for every strategy profile $\tilde{g}$ of the form		
		$\tilde{g}\hspace*{-2pt}:=\hspace*{-2pt}\{\hspace*{-1pt}\tilde{g}^{i}_t\hspace*{-1pt}:\hspace*{-1pt}\mathcal{S}_t^i\times \mathcal{C}_t\rightarrow \Delta(\mathcal{A}_t^i), i\hspace*{-2pt}\in\hspace*{-2pt}\mathcal{N}\hspace*{-1pt},\hspace*{-1pt} t\hspace*{-2pt}\in\hspace*{-2pt}\mathcal{T}\}$ and $a_t\hspace*{-2pt}\in\hspace*{-2pt}\mathcal{A}_t$, $t\hspace*{-2pt}\in\hspace*{-2pt}\mathcal{T}$;
		\begin{align} 		
		\mathbb{E}^{\tilde{g}_{1:t-1}\hspace*{-2pt}}\left\{\hspace*{-2pt}u_t^i(\hspace*{-1pt}X_t\hspace*{-1pt},\hspace*{-1pt}A_t\hspace*{-1pt})\hspace*{-1pt}\mid c_t\hspace*{-1pt},\hspace*{-1pt}p_t^i\hspace*{-1pt},\hspace*{-1pt}a_t\hspace*{-2pt}\right\}\hspace*{-3pt}=\hspace*{-2pt}\mathbb{E}^{\tilde{g}_{1:t-1}\hspace*{-2pt}}\left\{\hspace*{-2pt}u_t^i(\hspace*{-1pt}X_t\hspace*{-1pt},\hspace*{-1pt}A_t\hspace*{-1pt})\hspace*{-1pt}\mid c_t\hspace*{-1pt},\hspace*{-1pt}s_t^{i}\hspace*{-1pt},\hspace*{-1pt}a_t\hspace*{-2pt}\right\}\hspace*{-2pt},\hspace*{-5pt}\label{eq:payoff-relevant2}
		\end{align} 
		for all realizations $\{\hspace*{-1pt}c_{t}\hspace*{-1pt},\hspace*{-1pt}p_{t}^i\}\hspace*{-3pt}\in\hspace*{-2pt}\mathcal{C}_{t}\hspace*{-1pt}\times\hspace*{-1pt}\mathcal{P}_{t}^i$ of positive probability where $s_{\tau}^{1:N}\hspace*{-3pt}=\hspace*{-2pt}\zeta_{\tau}^{1:N}\hspace*{-2pt}(p_{\tau}^{1:N}\hspace*{-1pt},\hspace*{-1pt}c_{\tau};\hspace*{-1pt}\tilde{g}_{1\hspace*{-1pt}:\tau-1}\hspace*{-1pt})$ for $\tau\in\mathcal{T}$;\vspace{5pt}	\item given an arbitrary strategy profile $\tilde{g}$ of the form $\tilde{g}\hspace*{-1pt}:=\hspace*{-1pt}\{\tilde{g}^{i}_t:\mathcal{S}_t^i\hspace*{-1pt}\times \hspace*{-1pt}\mathcal{C}_t\rightarrow \Delta(\mathcal{A}_t^i), i\hspace*{-2pt}\in\hspace*{-2pt}\mathcal{N}, t\hspace*{-2pt}\in\hspace*{-2pt}\mathcal{T}\}$, $i\hspace*{-2pt}\in\hspace*{-2pt}\mathcal{N}$, and $t\hspace*{-2pt}\in\hspace*{-2pt}\mathcal{T}$,
		\begin{align}
		\mathbb{P}^{\tilde{g}_{1:t-1}}\hspace*{-2pt}\left\{\hspace*{-2pt}s_t^{-i}\hspace*{-1pt}\mid p_t^i\hspace*{-1pt},\hspace*{-1pt}c_t\hspace*{-2pt}\right\}\hspace*{-3pt}=\hspace*{-2pt}\mathbb{P}^{\tilde{g}_{1:t-1}}\hspace*{-2pt}\left\{\hspace*{-1pt}s_t^{-i}\hspace*{-1pt}\mid s_t^i\hspace*{-1pt},\hspace*{-1pt}c_t\hspace*{-2pt}\right\}\hspace*{-1pt},\hspace*{-4pt}\label{eq:sufficientinfo}
		\end{align}
		for all realizations $\{c_{t}\hspace*{-1pt},\hspace*{-1pt}p_{t}^i\}\hspace*{-2pt}\in\hspace*{-2pt}\mathcal{C}_{t}\hspace*{-2pt}\times\hspace*{-2pt}\mathcal{P}_{t}^i$ of positive probability where $s_{\tau}^{1:N}\hspace*{-3pt}=\hspace*{-2pt}\zeta_{\tau}^{1:N}\hspace*{-2pt}(p_{\tau}^{1:N}\hspace*{-1pt},\hspace*{-1pt}c_{\tau};\hspace*{-1pt}\tilde{g}_{1\hspace*{-1pt}:\tau-1}\hspace*{-1pt})$ for $\tau\in\mathcal{T}$.		
	\end{enumerate}
	\label{def:sufficient-part1}
\end{definition}
Condition (i) of Definition \ref{def:sufficient} appears in the definition of $S^i_t$ in Definition \ref{def:sufficient-part1}, and condition (ii) of Definition \ref{def:sufficient} on recursive update is the same as condition (i) in Definition \ref{def:sufficient-part1}.
Condition (iii) of Definition \ref{def:sufficient} directly leads to (iii) and (iv) of Definition \ref{def:sufficient-part1}; 
the utility $u_t^i(X_t, A_t)$ in 
condition (iii) and the random variable $s^{-i}_t$ in condition $(iv)$ of Definition \ref{def:sufficient-part1} are functions of $(x_t, s_t)$ whose distribution conditioned on $(p^i_t, c_t)$ is the same as conditioned on $(s^i_t, c_t)$ under condition (iii) of Definition \ref{def:sufficient}.

However, condition (ii) of Definition \ref{def:sufficient-part1} may not hold for sufficient private information satisfying Definition \ref{def:sufficient}. 
Consider the following example.
Suppose $X_1 = Y^1_1 \text{ XOR } Y^2_1$, and $Y^1_1, Y^2_1$ takes values in $\{0, 1\}$ with equal probability. $Z_1 = \emptyset$ and $Z_2 = X_1$. Then $S^1_1=S^2_1=\emptyset$ satisfies Definition \ref{def:sufficient} because $\prob(x_1, s^{-i}_1\mid p^i_1, c_1) = \prob(x_1\mid y^i_1) = 0.5 = \prob(x_1, s^{-i}_1\mid s^i_1, c_1)$. However, they don't satisfy condition (ii) of Definition \ref{def:sufficient-part1} because $\prob(z_2\mid p_1, c_1, a_1) = \prob(x_1\mid y^1_1, y^2_1) = \mathds{1}(x_1 = y^1_1 \text{ XOR } y^2_1) \neq \prob(z_2\mid s_1, c_1, a_1) = \prob(x_1) = 0.5$.

\subsection{Proof of the generalized better reply secure property for the augmented stage-game}
\label{appendix:example}
We show that when $c > 24$ the augmented stage-game $\hat G_1$ in Section \ref{sec:example} is generalized better reply secure. For that matter, we set $\beta^*(q) = \mathds{1}(q \leq 1/3)$ and consider the following five cases.
\begin{enumerate}[{Case} (i), leftmargin=1.2cm]
    \item $r^0_1(\bar\alpha, \bar q) \neq 0$.
    In this case Bayes' rule doesn't hold at $(\bar\alpha, \bar q)$. We focus on agent $0$ and select the belief to satisfy Bayes' rule as follows:
    \begin{align}
        \phi^0(\tilde\alpha, \tilde q) = (
        \tilde \alpha_2 p + \tilde \alpha_1 (1-p), \tilde \alpha_2 (1 - p) + \tilde \alpha_1 p)
    \end{align}
    Then this $\phi^0$ is a closed correspondence. From this construction of $\phi^0$, we can pick $\epsilon > 0$ such that
    \begin{align*}
        r^0_1(\tilde\alpha, \phi^0(\tilde\alpha, \tilde q))
         = 0 > r^0_1(\bar\alpha, \bar q ) + \epsilon
    \end{align*}    
    \item $r^0_1(\bar\alpha, \bar q) = 0$, and $\bar\pi_{-1} \neq 1/3$ and $\bar\pi_1 \neq 1/3$.

    Since $\beta^*(q) = 1$ if $q < 1/3$, $\beta^*(q) = 0$ if $q > 1/3$, $\beta^*(\cdot)$ is continuous at points where $q \neq 1/3$. Hence, we can find $\epsilon > 0$ s.t. $\beta^*(\tilde q_{-1}) = \beta^*(\bar q_{-1})$ for all $\tilde q_{-1} \in (\bar q_{-1} - \epsilon, \bar q_{-1} + \epsilon)$, and $\beta^*(\tilde q_{1}) = \beta^*(\bar q_{1})$ for all $\tilde q_{1} \in (\bar q_{1} - \epsilon, \bar q_{1} + \epsilon)$. In this region we have 
    \begin{align}
        r^A_1(\alpha, \tilde q) = r^A_1(\alpha, \bar q) 
    \end{align}
    for all $\alpha$. Let 
    \begin{align}
        \phi^A(\tilde\alpha, \tilde q) = \argmax_\alpha r^A_1(\alpha, \tilde q)
    \end{align}
    Because $r^A_1(\cdot)$ is continuous in the region under consideration, $\phi^A(\cdot)$ has a closed graph from Berge's maximum theorem. Note that for all $\tilde q_{1} \in (\bar q_{1} - \epsilon, \bar q_{1} + \epsilon)$, $\tilde q_{1} \in (\bar q_{1} - \epsilon, \bar q_{1} + \epsilon)$
    \begin{align}
        r^A_1(\phi^A(\tilde\alpha, \tilde q), \tilde q) = \max_\alpha r^A_1(\alpha, \tilde q) = \max_\alpha r^A_1(\alpha, \bar q)
    \end{align}
    If $\max_\alpha r^A_1(\alpha, \bar q) > r^A_1(\bar\alpha, \bar q)$ we can find $\epsilon > 0$ such that for $\tilde q_{1} \in (\bar q_{1} - \epsilon, \bar q_{1} + \epsilon)$, $\tilde q_{1} \in (\bar q_{1} - \epsilon, \bar q_{1} + \epsilon)$, $ r^A_1(\phi^A(\tilde\alpha, \tilde q), \tilde q) = \max_\alpha r^A_1(\alpha, \tilde q) \geq r^A_1(\bar\alpha, \bar q) + \epsilon$.

    If $\max_\alpha r^A_1(\alpha, \bar q) = r^A_1(\bar\alpha, \bar q)$, then Alice has no profitable deviation. Furthermore, since $r^0_1(\bar\alpha, \bar q) = 0$, agent $0$ has no profitable deviation. Consequently, $(\bar\alpha, \bar q)$ is an equilibrium if $max_\alpha r^A_1(\alpha, \bar q) = r^A_1(\bar\alpha, \bar q)$.
    \item $r^0_1(\bar\alpha, \bar q) = 0$, $\bar\pi_{-1} = 1/3$ and $\bar\pi_1 \neq 1/3$.

    Note that $\bar q_{-1} = 0.8\bar\alpha_1 + 0.2\bar\alpha_2 = 1/3$ and $\beta^*(\bar q_{-1}) = 1/3$.
    Since $\bar\pi_1 \neq 1/3$, we can find $\epsilon > 0$ s.t. $\beta^*(\tilde q_{1}) = \beta^*(\bar q_{1})$ for all $\tilde q_{1} \in (\bar q_{1} - \epsilon, \bar q_{1} + \epsilon)$. Therefore, 
    \begin{align}
        r^A_1(\bar\alpha, \bar q) = 0.5c(1 - \bar\alpha_1 + \bar\alpha_2) + 0.5(2 - \bar\alpha_1 - \bar\alpha_2) + 0.5(3\bar q_{1} - 1) \beta^*(\bar q_1)
    \end{align}
    Pick for Alice
    \begin{align}
        \phi^A(\tilde\alpha, \tilde q) = (0, 1)
    \end{align}
    for all $\tilde \alpha_i \in (\bar \alpha_i - \epsilon, \bar \alpha_i + \epsilon), i=1,2$, $\tilde q_i \in (\bar q_i - \epsilon, \bar q_i + \epsilon), i=-1,1$. We get
    \begin{align}
        r^A_1(\phi^A(\tilde\alpha, \tilde q), \tilde q) = &c + 0.5 + 0.5( 0.6 - 1)\beta^*(\tilde q_{-1}) + 0.5(2.4 - 1)\beta^*(\tilde q_{-1})
        \notag\\
        = &c + 0.5 - 0.2\beta^*(\tilde q_{-1}) + 0.7\beta^*(\bar q_{-1})
    \end{align}
    and 
    \begin{align}
        & r^A_1(\phi^A(\tilde\alpha, \tilde q), \tilde q) - r^A_1(\bar\alpha, \bar q) - \epsilon  
        \notag\\
        = & 0.5c(1 + \bar\alpha_1 - \bar\alpha_2) -0.5(1 + \bar\alpha_1 + \bar\alpha_2) 
        \notag\\
        & -  0.2\beta^*(\tilde q_{-1}) + 0.5(2.4 - 3\bar q_1)\beta^*(\bar q_{-1}) - \epsilon
        \notag\\
        \geq & 0.5c(1 + \bar\alpha_1 - \bar\alpha_2) -0.5 * 3 - 0.2 - 0.5 * 0.6 - \epsilon
    \end{align}
    When $\bar q_{-1} = 1/3$, then $0.8\bar\alpha_1 + 0.2\bar\alpha_2 = 1/3 \Rightarrow \bar \alpha_1 = 5/12 - 3/12 \bar \alpha_2$. Therefore,
    \begin{align}
        1 + \bar \alpha_1 -\bar \alpha_2 = 17/12 - 15/12 \bar \alpha_2 \geq 1/6
    \end{align}
    where the minimum is at $\bar\alpha_1 = 1/6$ and $\bar\alpha_2=1$.

    When $c > 24$, then 
    \begin{align}
        0.5c(1 + \bar \alpha_1 -\bar \alpha_2) \geq c/12 > 2
    \end{align}
    and 
    $r^A_1(\phi^A(\tilde\alpha, \tilde q), \tilde q) - r^A_1(\bar\alpha, \bar q) - \epsilon > 0$.
    \item $r^0_1(\bar\alpha, \bar q) = 0$, and $\bar\pi_1 = 1/3$ and $\bar\pi_{-1} \neq 1/3$.

    This case is similar to case (iii). 
     Since $\bar\pi_{-1} \neq 1/3$, we can find $\epsilon > 0$ s.t. $\beta^*(\tilde q_{-1}) = \beta^*(\bar q_{-1})$ for all $\tilde q_{-1} \in (\bar q_{-1} - \epsilon, \bar q_{-1} + \epsilon)$. Furthermore, 
    \begin{align}
        & r^A_1(\bar\alpha, \bar q) 
        \notag\\
        = & 0.5c(1 - \bar\alpha_1 + \bar\alpha_2) + 0.5(2 - \bar\alpha_1 - \bar\alpha_2) + 0.5(3\bar q_{-1} - 1) \beta^*(\bar q_{-1})
    \end{align}
    Pick for Alice the closed correspondence (as in case (iii))
    \begin{align}
        \phi^A(\tilde\alpha, \tilde q) = (0, 1)
    \end{align}
    for all $\tilde \alpha_i \in (\bar \alpha_i - \epsilon, \bar \alpha_i + \epsilon), i=1,2$, $\tilde q_i \in (\bar q_i - \epsilon, \bar q_i + \epsilon), i=-1,1$. Then
    \begin{align}
        & r^A_1(\phi^A(\tilde\alpha, \tilde q), \tilde q) 
        \notag\\
        = &c + 0.5 - 0.2\beta^*(\bar q_{-1}) + 0.7\beta^*(\tilde q_{-1})
    \end{align}
     and 
    \begin{align}
        & r^A_1(\phi^A(\tilde\alpha, \tilde q), \tilde q) - r^A_1(\bar\alpha, \bar q) - \epsilon  
        \notag\\
        = & 0.5c(1 + \bar\alpha_1 - \bar\alpha_2) -0.5(1 + \bar\alpha_1 + \bar\alpha_2) 
        \notag\\
        & + 0.5( 0.6 - 3\bar q_{-1})\beta^*(\bar q_{-1}) + 0.7\beta^*(\tilde q_{-1}) - \epsilon
        \notag\\
        \geq & 0.5c(1 + \bar\alpha_1 - \bar\alpha_2) -0.5 * 3 - 0.5 * 2.4 - \epsilon
    \end{align}
    When $\bar q_{1} = 1/3$, $0.2\bar\alpha_1 + 0.8\bar\alpha_2 = 1/3 \Rightarrow \bar \alpha_2 = 5/12 - 3/12 \bar \alpha_1$. Therefore,
    \begin{align}
        1 + \bar \alpha_1 -\bar \alpha_2 = 7/12 + 15/12 \bar \alpha_1 \geq 7/12.
    \end{align}
     When $c > 24$, then 
    \begin{align}
        0.5c(1 + \bar \alpha_1 -\bar \alpha_2) \geq 7/24 c  > 2.7
    \end{align}
    and 
    $r^A_1(\phi^A(\tilde\alpha, \tilde q), \tilde q) - r^A_1(\bar\alpha, \bar q) - \epsilon > 0$.
    \item $r^0_1(\bar\alpha, \bar q) = 0$, and $\bar\pi_1 = 1/3$ and $\bar\pi_{-1} = 1/3$.

    We have
    \begin{align}
        r^A_1(\bar\alpha, \bar q) = 0.5c(1 - \bar\alpha_1 + \bar\alpha_2) + 0.5(2 - \bar\alpha_1 - \bar\alpha_2)
    \end{align}
    Pick for Alice the closed correspondence (as in cases (iii) and (iv))
    \begin{align}
        \phi^A(\tilde\alpha, \tilde q) = (0, 1)
    \end{align}
    for all $\tilde \alpha_i \in (\bar \alpha_i - \epsilon, \bar \alpha_i + \epsilon), i=1,2$, $\tilde q_i \in (\bar q_i - \epsilon, \bar q_i + \epsilon), i=-1,1$. Then
    \begin{align}
        & r^A_1(\phi^A(\tilde\alpha, \tilde q), \tilde q) - r^A_1(\bar\alpha, \bar q) - \epsilon  
        \notag\\
        = &0.5c(1 + \bar\alpha_1 - \bar\alpha_2) -0.5(1 + \bar\alpha_1 + \bar\alpha_2) -0.2\beta^*(\tilde q_{-1}) + 0.7\beta^*(\tilde q_{-1}) - \epsilon
        \notag\\
        \geq & 0.5c(1 + \bar\alpha_1 - \bar\alpha_2) -0.5 * 3 - 0.2 - \epsilon
    \end{align}
    Then we have $r^A_1(\phi^A(\tilde\alpha, \tilde q), \tilde q) - r^A_1(\bar\alpha, \bar q) - \epsilon  > 0$ following the steps in (iv).
\end{enumerate}

\bibliography{bib}
\end{document}